\documentclass[12pt, draftclsnofoot, onecolumn]{IEEEtran}
\usepackage{pgfplots}
\pgfplotsset{compat=1.7}
\usepackage{subcaption}
\usepackage[margin=1in,letterpaper]{geometry}
\usepackage[T1]{fontenc}
\usepackage{lmodern}
\usepackage{amsmath}
\usepackage{amssymb}
\usepackage{amsthm}
\usepackage{color,soul}
\usepackage{xcolor}
\usepackage{graphicx}
\usepackage{algorithm}
\usepackage{tcolorbox}
\usepackage{graphicx}
\usepackage{subcaption}
\usepackage{pgfplots}
\pgfplotsset{%
    ,compat=1.12
    ,every axis x label/.style={at={(current axis.right of origin)},anchor=north west}
    ,every axis y label/.style={at={(current axis.above origin)},anchor=north east}
    }

\usepackage{tkz-graph}
\usepackage{tikz}
\usepackage{algpseudocode}
\usepackage[colorlinks=true, linkcolor=red, urlcolor=blue, citecolor=gray]{hyperref}

\usepackage{comment}
\newtheorem{thm}{Theorem}
\newtheorem{cor}{Corollary}
\newtheorem{rem}{Remark}
\newtheorem{claim}{Claim}
\newtheorem{lem}{Lemma}
\newtheorem{prop}{Proposition}

\newtheorem{fact}{Fact}

\makeatother

\newcommand{\wt}{\widetilde}

\begin{document}
	\title{Age of Information-Reliability Trade-offs in Energy Harvesting Sensor Networks }

\author{Navid Nouri\footnote{Navid Nouri is with the School of Computer and Communication Sciences, EPFL, Lausanne, Switzerland (Email: navid.nouri@epfl.ch). Supported by ERC Starting Grant 759471.}, Darya Ardan, and Mahmood Mohassel Feghhi \footnote{Darya Ardan and Mahmood Mohassel Feghhi are with the Faculty of Electrical and Computer Engineering, University of Tabriz, Tabriz, Iran (Email: daryaardan@gmail.com, mohasselfeghhi@tabrizu.ac.ir).}}

	\maketitle
	\begin{abstract}
        Age of Information (AoI) is a recently defined quantity, which measures the freshness of information in a communication scheme. In this paper, we analyze a network that consists of a sensor node, an energy source and a receiver. The energy source is broadcasting energy and the sensor is charging its battery using energy-harvesting technologies. Whenever the battery gets fully charged, the sensor measures some quantity (called its status) from an environment, and (or) transmits its status to the receiver. The full analysis of AoI of this network, in the setting when each status is sent once, is given previously. However, that approach does not present a reliability guarantee better than the success probability of one transmission. In this paper, we present a closed form expression for the AoI of a deterministic and a randomized scheme that guarantee a desired probability of successful transmission for each status, alongside with a zero-error scheme. Furthermore, we define a novel notion called AoI-reliability trade-off and present the AoI-reliability trade-offs of our schemes. Additionally, we show that numerical results match our theoretical findings.  
	\end{abstract}
\section{Introduction}
\emph{Energy harvesting sensor networks} have become an attractive concept, due to multiple factors. First, and probably the most influential factor, is the necessity of energy harvesting for sensor networks used in a wide range of applications from medical sensors in/on the body, to sensors used in remote environments \cite{6449245}. The sensors with built-in (isolated) batteries are not desirable in such applications, since the sensor dies as soon as the battery is exhausted. Moreover, this technology enables us to save more energy, use renewable energy sources, and produce more compact and lighter devices, which are more desirable, specially in medical sensors. Energy harvesting, also received a lot of attention in the field of Internet of Things (IoT)-- see \cite{8389371} for a survey. Varshney \cite{4595260} introduced the concept of wireless power transmission for the first time, and works such as  \cite{6449245,6489506,FEGHHI2018108} studied the problem further for different architectures.

A natural metric for measuring the freshness of information in the receiver of communication systems, is defined as \emph{Age of Information} (AoI) \cite{AoI-main}. It is defined as the time elapsed since the generation of the freshest status update that has reached the receiver. The concept of AoI first appeared in \cite{6195689,10.1145/2034832.2034852,5984917}. There are numerous works investigating AoI from a queueing-theoretic standpoint, e.g., \cite{AoI-main,6310931,7364263, 8820073}. On the other hand, \cite{8254156, 8123937,age-minimal,8822722,8406846,8437904} study AoI of various energy harvesting systems and some propose  techniques for minimizing average AoI of their systems. We refer the reader to the survey \cite{yates2020age} for a more complete set of references.

Another main desired feature in communication systems is \emph{reliability}. In communication systems with higher reliability guarantees, messages are received by the receiver (without error) with higher probability. In most of the practical settings, designing a communication system with a high reliability guarantee is crucial.
 For instance, in medical sensors, it is vital that the measurements reach the monitor with the smallest possible error. In some other practical settings, such as streaming systems, it is important that each message gets to the receiver with some high probability. However, achieving high reliability guarantees are not for free, and results in schemes with higher average AoI. 
 
 \subsection{Our results}
 In this paper, we consider a system model where a sensor node with a energy harvesting technology transmits its status to the receiver over a noisy channel.\footnote{This system model is similar to what is used in \cite{krikidis2019average} and is described formally in Section~\ref{sec:sysmodel}.} We present two simple and natural schemes for this setting based on re-transmission of the failed messages. 
 
 Our first scheme is a {\bf deterministic scheme} (see Section~\ref{sec:det}), that considering the desired reliability guarantee, determines the maximum number of transmissions necessary for each status, and then transmits its status to the receiver, each time its battery is fully charged, until either the status is received by the receiver, or it reaches the maximum number of trials. In Theorem~\ref{thm:det} we present the average AoI of this scheme. Despite the fact that this scheme has the advantage of being a deterministic scheme, it is wasteful in certain regimes, when in order to compensate a small error probability, the sensor needs to transmit the status one more time (see the discussion at the end of Section~\ref{sec:det}). This issue also results in a non-smooth trade off between average AoI and reliability (see Figure~\ref{fig:tradeoff}). 
 
 Our second scheme is a {\bf randomized scheme} (see Section~\ref{sec:rand}). In this scheme, different than the deterministic scheme, the maximum number of transmissions for all statuses are not identical, and are decided in a random fashion. The average AoI of this scheme is presented in Theorem~\ref{thm:rand}. Our randomized approach, solves the wastefulness issue of the deterministic approach, and results in a smooth average AoI-reliability trade off (see Figure~\ref{fig:tradeoff}).
 
 Finally, as a direct corollary of our calculations, we present the average AoI of a zero-error scheme in Corollary~\ref{cor:zeroerror}.

\subsection{Related work}
We use a similar system model as in the recent work of Krikidis \cite{krikidis2019average}, with the following important difference that in our schemes, any status is received by the receiver with probability at least $1-\delta$ for any $\delta \in (0,1)$ that one desires (as opposed to a fixed value $\pi$ (success probability of one transmission) in \cite{krikidis2019average}). In other words, while the reliability guarantee of the approach of \cite{krikidis2019average} is determined by the channel, in our scheme the user determines the minimum probability of success for each status. Also, \cite{8006544} considers a communication system with unreliable multiaccess channels, where the sensor nodes try to transmit their status to the receiver by repeating the transmission, similar to our approach in this paper. However, their setting assumes that the nodes are connected to the power grid, as opposed to our energy harvesting setting. On the other hand, \cite{8422521} studies the trade off between energy efficiency and age of information. In their system model a server transmits information to multiple users
via multicasting based on requests from the users, where higher energy efficiency is achieved at the cost of higher age of information.

\subsection{Outline}
 In Section~\ref{sec:prelim}, we define basic notations and definitions used throughout the paper. Our system model is presented in Section~\ref{sec:sysmodel}. The main schemes and their analysis are presented in Section~\ref{sec:schemes}. Finally, the numerical results of our simulations and comparison of various scenarios are presented in Section~\ref{sec:simulations}. 
\section{Preliminaries}\label{sec:prelim}
For any positive integer $k$, we define $[k]:=\{1,2,\ldots,k\}$ and we let $[0]:= \varnothing$. We use $\mathbf{1}_X$ as the indicator of event $X$, i.e., $\mathbf{1}_X=1$ if $X$ holds and $\mathbf{1}_X=0$, otherwise. We show any Bernoulli random variable $K$, which takes $1$ with probability $p$ and $0$ with probability $1-p$, using $K\sim \mathrm{Bern}(p)$.
For any $\pi\in(0,1)$, $F$ is a random variable with geometric distribution with parameter $\pi$, if for each positive integer $i$, $\Pr[F=i]=\pi (1-\pi)^{i-1}$,
and we denote it by $F \sim  \mathrm{Geom}(\pi)$. The fact below gives the expressions for the first and the second moments of a random variable with geometric distribution. 
\begin{fact}\label{fact:F}
	If $F\sim \mathrm{Geom}(\pi)$ for some $\pi \in (0,1)$, then we have
$
	\mathbb{E}[F]=\frac{1}{\pi}
$
	and
$
	\mathbb{E}\left[F^2\right] = \frac{2-\pi}{\pi^2}.
$
\end{fact}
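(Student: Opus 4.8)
The plan is to compute both moments directly from the probability mass function $\Pr[F=i]=\pi(1-\pi)^{i-1}$, using the standard closed forms for the geometric power series and its derivatives. Write $q := 1-\pi$, so that $q\in(0,1)$ and every series below converges absolutely. Starting from $\sum_{i\ge 0} q^i = 1/(1-q) = 1/\pi$, I would differentiate term by term with respect to $q$ to obtain $\sum_{i\ge 1} i\, q^{i-1} = 1/(1-q)^2 = 1/\pi^2$. Multiplying by $\pi$ gives
\[
\mathbb{E}[F] \;=\; \sum_{i\ge 1} i\,\pi\, q^{i-1} \;=\; \pi\cdot\frac{1}{\pi^2} \;=\; \frac{1}{\pi}.
\]

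For the second moment I would differentiate once more. From $\sum_{i\ge 1} i\, q^{i-1} = (1-q)^{-2}$, differentiating yields $\sum_{i\ge 2} i(i-1)\, q^{i-2} = 2(1-q)^{-3}$, hence $\sum_{i\ge 1} i(i-1)\, q^{i-1} = 2q(1-q)^{-3} = 2q/\pi^3$. Since $i^2 = i(i-1) + i$, adding the two series gives $\sum_{i\ge 1} i^2 q^{i-1} = 2q/\pi^3 + 1/\pi^2$, and multiplying by $\pi$:
\[
\mathbb{E}[F^2] \;=\; \frac{2q}{\pi^2} + \frac{1}{\pi} \;=\; \frac{2(1-\pi) + \pi}{\pi^2} \;=\; \frac{2-\pi}{\pi^2}.
\]

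An alternative route I would be equally happy to take is a short recursion exploiting the memoryless structure: conditioning on whether the first trial succeeds gives $\mathbb{E}[F] = 1 + (1-\pi)\mathbb{E}[F]$ and $\mathbb{E}[F^2] = \pi\cdot 1 + (1-\pi)\,\mathbb{E}[(F+1)^2]$, and solving these linear equations (substituting $\mathbb{E}[F]=1/\pi$ into the second) again produces $\mathbb{E}[F^2] = (2-\pi)/\pi^2$. Both routes are elementary; the only points that need any attention are observing that $\pi\in(0,1)$ is precisely what makes $|q|<1$, so that term-by-term differentiation of the power series is legitimate, and carrying out the final algebraic simplification cleanly. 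I do not anticipate any genuine obstacle here — this is a standard computation recorded for later use in Theorems~\ref{thm:det} and~\ref{thm:rand}.
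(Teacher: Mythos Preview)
Your proof is correct; both the power-series differentiation and the recursion arguments are standard and cleanly executed. The paper itself provides no proof for this statement --- it is recorded as a ``Fact'' and left unproven as a well-known property of the geometric distribution --- so there is nothing to compare against beyond noting that your write-up supplies exactly the kind of elementary justification one would expect.
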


\section{System model}\label{sec:sysmodel}
In this section, we present our system model, which is similar to the system model of \cite{krikidis2019average}, with some differences, which we highlight below. This model is a natural setting for energy harvesting systems. In this model, we have three nodes: {\bf (1)} an energy source node, $\mathcal{E}$, which broadcasts power in the environment, {\bf (2)} a sensor (or transmitter) node, $\mathcal{S}$, which uses the energy broadcasted by $\mathcal{E}$ to charge its battery and use this power to transmit its message later, {\bf (3)} a receiver node, $\mathcal{R}$, which receives the message transmitted by $\mathcal{S}$. There are three channels in this model, one from $\mathcal{E}$ to $\mathcal{S}$ for \emph{energy harvesting} (using RF signals), one from $\mathcal{S}$ to $\mathcal{R}$ for \emph{message transmission} and a feedback channel from $\mathcal{R}$ to $\mathcal{S}$, which is used to inform the sensor when a successful transmission happens. More specifically, the sensor node is not connected to the power grid and uses the power in its battery for the transmission. This battery is charged using the energy harvested from $\mathcal{E}$-$\mathcal{S}$ channel. The sensor node only transmits its status when its battery is fully charged and uses all the stored energy for the transmission. Sending information and harvesting energy both are possible simultaneously, due to the differences between frequency bands of the $\mathcal{E}$-$\mathcal{S}$ link and the $\mathcal{S}$-$\mathcal{R}$ link, and a proper architecture of battery \cite{luo2013optimal}. We consider times as tantamount size slots. In each time slot, the battery is getting charged in a quantized fashion in time. At the end of each time slot, the sensor can decide to transmit its status and (or) update its status. Below, we discuss the channels of this model.
\paragraph{Harvesting} The  $\mathcal{E}$-$\mathcal{S}$ channel is a Rayleigh  fading channel, which has power $n_t\sim \exp(\lambda)$ at time slot $t$.\footnote{We can also assume that an additive white Gaussian noise is present in this channel, however the contribution of this noise to the harvesting is negligible \cite{krikidis2019average}.} Assume that $\mathcal{E}$ broadcasts power $P$ in each time slot. The energy stored in the battery evolves as follows
\begin{align}
    E_t:=\min\{ \mathbf{1}_{E_{t-1}<B}E_{t-1}+\eta P n_t,B\},
\end{align}
where $\eta \in [0,1]$ is the RF-to-DC conversion rate and $B$ is the maximum capacity of the battery. Also, assume that the battery is out of charge in the beginning of the scheme, i.e., $E_{0}:=0$. Furthermore, whenever the sensor transmits a status, the transmission drains the whole energy in the battery. Also, for the ease of notation in the rest of the paper we let
\begin{align}
\beta:=\lambda B/(\eta P).
\end{align}
\paragraph{Transmission} The $\mathcal{S}$-$\mathcal{R}$ channel is also a Rayleigh fading channel, with power $m_t\sim \exp(\lambda)$, alongside the presence of an additive white Gaussian noise with variance $\sigma^2$. The signal-to-noise ratio for the $\mathcal{S}$-$\mathcal{R}$ channel at time slot $t$ is
\begin{align}
    \gamma_t=\frac{B\cdot m_t}{\sigma^2}.
\end{align}
Assuming the spectral efficiency of $r$ bits per channel use (BPCU), the probability of a successful transmission in the $\mathcal{S}$-$\mathcal{R}$ channel is given by 
\begin{align}
    \pi = \Pr\left[\log_2 (1+\gamma_t)> r\right] =\exp\left(-\lambda \frac{2^r-1}{B/\sigma^2}\right).
\end{align}
\paragraph{Feedback} The feedback channel is an error-free channel from the receiver to the sensor node, and is used to inform the sensor node whenever a successful transmission happens. The error-free assumption on this channel is reasonable due to the fact that the receiver is connected to the power grid and can transmit messages with high power. Note that since \cite{krikidis2019average} does not aim for any reliability guarantee, this channel is not needed for their setting.

 For the communication systems in general, if the most recently received message at time $t$ in the receiver is $m$, then AoI of the system at time $t$ is defined as the age of message $m$, i.e., the time passed from the generation of message $m$ in the transmitter. We consider the discrete definition of AoI, which defines AoI for the time slots (each has one time unit length) as opposed to continues values. 
 \begin{rem}\label{rem:AoI-disc}
 	Note that the discrete definition of AoI is similar to the continuous definition of AoI, which is more common in the literature, but we prefer the discrete one for our purpose, since this makes it clear that the transmission of the messages are happening in discrete times. Also, note that even using the continuous definition of AoI, all the results of this paper still hold, and this assumption does not ease the problem. 
 \end{rem}

\section{Schemes with reliability guarantees} \label{sec:schemes}
 In the following subsections, we first present and analyze a natural deterministic scheme, and then we use randomness to achieve an improved and smooth AoI-reliability trade off.
 
\subsection{Deterministic scheme}\label{sec:det}

Let $\pi$ be the probability that a single transmission over the channel is successful. Then, if the sensor transmits its current status $k$ times, where 
    $k:= \left\lceil\log_{1-\pi} \delta\right\rceil$,
then the status is received by the receiver with probability at least $1-\delta$. Basically, after each successful transmission or giving up on the previous status (we elaborate these cases below), the sensor waits for its battery to recharge. As soon as the battery gets fully charged, the sensor updates its status and uses its battery's energy to transmit the status to the receiver. We assume that there is one time slot allocated for the transmission, i.e., the receiver receives the transmitted message with a delay of one time slot.\footnote{The reader should note that even if the transmission was assumed to be immediate, it only affects the AoI calculations by a additive $-1$ factor, however, this assumption is also present in the system model of \cite{krikidis2019average} and is reasonable due to the delays caused by various factors such as the processing time in the receiver.} For any status sensed by the sensor, the sensor repeats transmitting it, until either the status is received without error by the receiver (\emph{successful transmission}), or the status reaches its limit for the maximum number of transmissions (\emph{giving up on the status}). 

Below, we present a formal version of our deterministic scheme in Algorithm~\ref{alg:k-LTS}.

\begin{algorithm}[H]
	\caption{\textsc{Deterministic Scheme}: A scheme that achieves $(1-\delta)$-Reliability deterministically.}  
	\label{alg:k-LTS} 
	\begin{algorithmic}[1]
		\State $k\gets \lceil \log_{1-\pi}\delta\rceil$ \label{line:k-det}
		\State Wait until the battery gets fully charged.  \Comment{Energy harvesting}\label{line:1-det}
		\If{the last transmission was successful or $\ell =k$}\newline
		\nonumber \text{~}\Comment{The feedback channel is used to determine the success of the last transmission}
		\newline \nonumber \text{~}\Comment{We assume that when we initialize the algorithm, $\ell=k$, so the if statement holds}
		\State Sense a new status and update $s$  
		\State $\ell\gets0$ 
		\EndIf
		\State Use the battery charge to send $s$ to the receiver 
		\State $\ell\gets \ell+1$ 
		\State Repeat
	\end{algorithmic}
\end{algorithm}

Now, we set a few notations, which we use frequently in the following claims and their proofs. For any integer $i\ge 1$, let $\tau_i$ be the time that $i$'th successful message is received by the receiver, and let $\tau_i'$ be the time that this message is generated (sensed) by the sensor. For any $i\ge 2$, we let $X_i$ denote the time between $(i-1)$'th and $i$'th successful transmission, i.e., $X_i:= \tau_i-\tau_{i-1}$. Also, let $H_i:= \tau_{i-1}-\tau'_{i-1}-1$. We also let $F_i$ denote the number of transmissions between $(i-1)$'th and $i$'th successful transmission, i.e., the number of full battery charges in that period. One can easily see that $F_i\sim \mathrm{Geom}(\pi)$, if $\pi$ is the probability that a single transmission over the channel is successful. For a better understanding of the upcoming proofs, we also define an auxiliary notion of AoI in the sensor, which measures the freshness of the message that the sensor has at any time. Figure~\ref{Fig:AoI} provides an instance of these definitions. Also, for ease of notation, we drop the indices when there is no ambiguity. 
 
 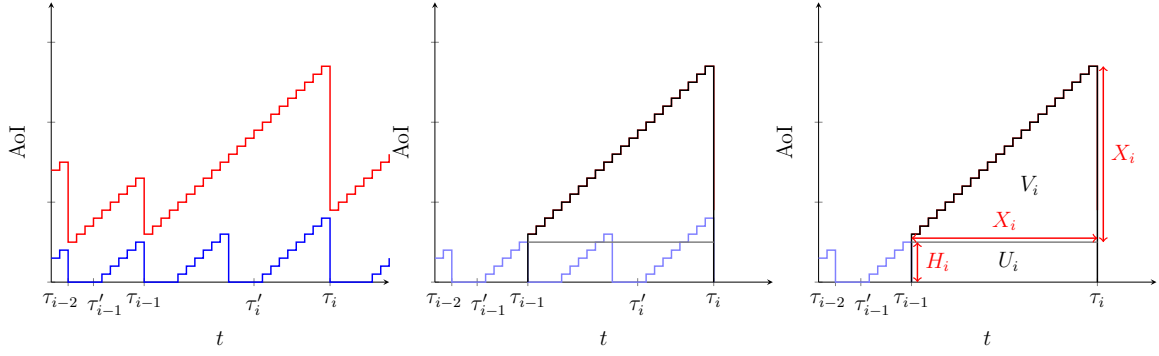
\begin{figure}[H]
 	\centering
 	\begin{subfigure}[b]{0.3\textwidth}
 		\centering
 		\begin{tikzpicture}[scale=0.7]
 		\pgfplotsset{width=8cm,compat=1.5}
 		\begin{axis}[%
 		,xlabel=$t$
 		,ylabel=$\mathrm{AoI}$
 		,ytick={}
 		,yticklabels={}
 		,axis x line = bottom,axis y line = left
 		,ymin=0
 		,ymax=35
 		,xmax= 40
 		,xmin= 0
 		,xtick={2,5,11,24,33}
 		,xticklabels={$\tau_{i-2}~~$,$~~~\tau_{i-1}'$,$\tau_{i-1}$,$\tau_i'$,$\tau_i$},
 		]
 		\addplot+[blue , const plot, no marks, thick] coordinates {(0,3) (1,4) (2,0) (3,0) (4,0) (5,0) (6,1) (7 ,2) (8,3) (9,4) (10 ,5) (11,0) (12,0) (13,0 ) (14,0) (15 ,1) (16,2) (17 ,3) (18 ,4) (19,5) (20, 6) (21,0 ) (22,0)(23 , 0) (24,0)  (25,1) (26,2) (27,3) (28,4) (29,5) (30,6) (31 ,7) (32,8) (33,0) (34 ,0) (35, 0) (36,0) (37,0) (38,1) (39,2) (40,3)} node[above,pos=.57,black] {};
 		\addplot+[red,const plot, no marks, thick] coordinates {(0,14) (1,15) (2,5) (3,6) (4,7) (5,8) (6,9) (7 ,10) (8,11) (9,12) (10 ,13) (11,6) (12,7) (13,8) (14,9) (15 ,10) (16,11) (17,12) (18,13) (19, 14) (20, 15) (21, 16) (22, 17) (23, 18) (24 , 19) (25,20) (26,21) (27,22) (28,23) (29,24) (30, 25) (31 , 26) (32, 27) (33,9) (34, 10) (35,11) (36,12) (37,13) (38,14) (39,15) (40,16) } node[above,pos=.57,black] {};

 		\end{axis}
 		
 		\end{tikzpicture}
 	\end{subfigure}
 	\begin{subfigure}[b]{0.3\textwidth}
 		\centering
 		\begin{tikzpicture}[scale=0.7]
 		\pgfplotsset{width=8cm,compat=1.5}
 		\begin{axis}[%
 		,xlabel=$t$
 		,ylabel=$\mathrm{AoI}$
 		,ytick={}
 		,yticklabels={}
 		,axis x line = bottom,axis y line = left
 		,ymin=0
 		,ymax=35
 		,xmax= 40
 		,xmin= 0
 		,xtick={2,5,11,24,33}
 		,xticklabels={$\tau_{i-2}~~$,$~~~\tau_{i-1}'$,$\tau_{i-1}$,$\tau_i'$,$\tau_i$},
 		]
 		
 		\addplot+[blue!50 , const plot, no marks, thick] coordinates {(0,3) (1,4) (2,0) (3,0) (4,0) (5,0) (6,1) (7 ,2) (8,3) (9,4) (10 ,5) (11,0) (12,0) (13,0 ) (14,0) (15 ,1) (16,2) (17 ,3) (18 ,4) (19,5) (20, 6) (21,0 ) (22,0)(23 , 0) (24,0)  (25,1) (26,2) (27,3) (28,4) (29,5) (30,6) (31 ,7) (32,8) (33,0) } node[above,pos=.57,black] {};
 		\addplot+[red,const plot, no marks, thick] coordinates { (11,6) (12,7) (13,8) (14,9) (15 ,10) (16,11) (17,12) (18,13) (19, 14) (20, 15) (21, 16) (22, 17) (23, 18) (24 , 19) (25,20) (26,21) (27,22) (28,23) (29,24) (30, 25) (31 , 26) (32, 27) (33,9)  } node[above,pos=.57,black] {};

 		\addplot+[black, const plot, no marks, thick] coordinates {(11,0) (11,6) (12,7) (13,8) (14,9) (15 ,10) (16,11) (17,12) (18,13) (19, 14) (20, 15) (21, 16) (22, 17) (23, 18) (24 , 19) (25,20) (26,21) (27,22) (28,23) (29,24) (30, 25) (31 , 26) (32, 27) (33,0)  } node[above,pos=.57,black] {};

 		\addplot+[gray, const plot, no marks, thick] coordinates { (11,5) (33 , 5) } node[above,pos=.57,black] {};

 		\end{axis}
 		\end{tikzpicture}
 	\end{subfigure}
 	\begin{subfigure}[b]{0.30\textwidth}
 		\centering
 		\begin{tikzpicture}[scale=0.7]
 		\pgfplotsset{width=8cm,compat=1.5}
 		\begin{axis}[%
 		,xlabel=$t$
 		,ylabel=$\mathrm{AoI}$
 		,ytick={}
 		,yticklabels={}
 		,axis x line = bottom,axis y line = left
 		,ymin=0
 		,ymax=35
 		,xmax= 40
 		,xmin= 0
 		,xtick={2,5,11,33}
 		,xticklabels={$\tau_{i-2}~~$,$~~~\tau_{i-1}'$,$\tau_{i-1}$,$\tau_i$},
 		]

 		\addplot+[blue!50 , const plot, no marks, thick] coordinates {(0,3) (1,4) (2,0) (3,0) (4,0) (5,0) (6,1) (7 ,2) (8,3) (9,4) (10 ,5) (11,0)} node[above,pos=.57,black] {};
 		\addplot+[red,const plot, no marks, thick] coordinates { (11,6) (12,7) (13,8) (14,9) (15 ,10) (16,11) (17,12) (18,13) (19, 14) (20, 15) (21, 16) (22, 17) (23, 18) (24 , 19) (25,20) (26,21) (27,22) (28,23) (29,24) (30, 25) (31 , 26) (32, 27) (33,9)  } node[above,pos=.57,black] {};

 		\addplot+[black, const plot, no marks, thick] coordinates {(11,0) (11,6) (12,7) (13,8) (14,9) (15 ,10) (16,11) (17,12) (18,13) (19, 14) (20, 15) (21, 16) (22, 17) (23, 18) (24 , 19) (25,20) (26,21) (27,22) (28,23) (29,24) (30, 25) (31 , 26) (32, 27) (33,0)  } node[above,pos=.57,black] {};

 		\addplot+[gray, const plot, no marks, thick] coordinates { (11,5) (33 , 5) } node[above,pos=.57,black] {};
 		
 		\node[above,black] at (225,5) {$U_i$};
 		\node[above,black] at (250,100) {$V_i$};
 		\addplot [<->, red, thick] coordinates { (11,5.5) (33 , 5.5) }node[above,pos=.5,red] {$X_i$};
 		\addplot [<->, red, thick] coordinates { (11.7,0) (11.7 , 5) }node[right,pos=.5,red] {$H_i$};
 		\addplot [<->, red, thick] coordinates { (33.7,5) (33.7 , 27) }node[right,pos=.5,red] {$X_i$};
 		\end{axis}
 		\end{tikzpicture}
 	\end{subfigure}
 	\caption{An instance for illustrating the notations defined above, and the steps that we use to calculate $\mathbb{E}[A_i]$.} 
 	\label{Fig:AoI}
 \end{figure}
First, we state the first and the second moments of $T$, where $T$ is the random variable for the time that it takes to fully charge an empty battery. 
\begin{prop}[Proposition 1 of \cite{krikidis2019average}]\label{prop:T}
	We have 
$
	\mathbb{E}[T]=1+\beta$ and 
$	\mathbb{E}[T^2]=1+3\beta+\beta^2$,
	where $\beta:=\lambda B/(\eta P)$ (see Section~\ref{sec:sysmodel} for the definition of parameters used in the definition of $\beta$).
\end{prop}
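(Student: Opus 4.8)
The plan is to identify $T$ with the first-passage index of a random walk with i.i.d.\ exponential increments and then exploit the standard correspondence between such hitting times and a Poisson counting process. Since the battery starts empty ($E_0=0$) and can only accumulate energy until it first becomes full, for $t$ up to the filling time we simply have $E_t=\min\{\eta P\sum_{s=1}^t n_s,\,B\}$. Hence, writing $\mu:=B/(\eta P)$ and recalling that the $n_s\sim\exp(\lambda)$ are i.i.d., we get $T=\min\{t\ge 1:\sum_{s=1}^t n_s\ge\mu\}$; the event $\sum_{s=1}^t n_s=\mu$ has probability zero, so the strict-versus-nonstrict distinction is immaterial.

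Next I would invoke the renewal/Poisson duality: if $(n_s)_{s\ge1}$ are i.i.d.\ $\exp(\lambda)$, then the number $N:=\#\{t\ge1:\sum_{s=1}^t n_s<\mu\}$ of partial sums strictly below level $\mu$ equals the number of points of a rate-$\lambda$ Poisson process in $[0,\mu)$, so $N\sim\mathrm{Poisson}(\lambda\mu)=\mathrm{Poisson}(\beta)$ by the definition $\beta=\lambda B/(\eta P)$. Since $T$ is precisely the index of the first partial sum that reaches $\mu$, we have $T=N+1$ almost surely.

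The conclusion is then immediate from the moments of a Poisson variable: $\mathbb{E}[N]=\beta$ and $\mathrm{Var}(N)=\beta$, hence $\mathbb{E}[N^2]=\beta+\beta^2$. Therefore $\mathbb{E}[T]=\mathbb{E}[N]+1=1+\beta$ and $\mathbb{E}[T^2]=\mathbb{E}[N^2]+2\mathbb{E}[N]+1=(\beta+\beta^2)+2\beta+1=1+3\beta+\beta^2$, as claimed. An alternative route that avoids citing the Poisson duality as a black box is to note that $\Pr[T>t]=\Pr\big[\sum_{s=1}^t n_s<\mu\big]=e^{-\beta}\sum_{j=0}^{t-1}\beta^j/j!$ via the Gamma CDF, and then evaluate $\mathbb{E}[T]=\sum_{t\ge0}\Pr[T>t]$ and $\mathbb{E}[T^2]=\sum_{t\ge0}(2t+1)\Pr[T>t]$ by interchanging the order of summation.

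The main obstacle is not any single computation but the bookkeeping at the boundary. One must check that the indicator $\mathbf{1}_{E_{t-1}<B}$ and the clipping at $B$ in the recursion do not disturb the identification of $T$ with the first-passage index — which they do not, precisely because we stop the instant the battery first fills — and one must track the off-by-one between "partial sums strictly below $\mu$" and "first partial sum at or above $\mu$," since that shift is exactly what produces the additive constants $1$ and $3$ appearing in $\mathbb{E}[T]$ and $\mathbb{E}[T^2]$.
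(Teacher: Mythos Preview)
Your argument is correct: the identification $T=N+1$ with $N\sim\mathrm{Poisson}(\beta)$ via the exponential--Poisson duality is exactly the right mechanism, and the moment computations are clean. The alternative Gamma-CDF tail-sum route you sketch also works and yields the same numbers.

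As for comparison with the paper, there is nothing to compare: the paper does not prove this proposition at all but simply attributes it to \cite{krikidis2019average} and moves on. Your write-up therefore supplies strictly more than the paper does. If anything, you could trim the final paragraph about ``bookkeeping at the boundary'' --- the point that the recursion reduces to a pure first-passage problem before the first fill is immediate and does not need to be flagged as an obstacle.
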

The proof of this proposition is given in \cite{krikidis2019average}.

 Now,  we state the following result about the first and the second moments of the time between $(i-1)$'th and $i$'th successful transmissions.
\begin{lem}\label{lem:XX2}
	Let $X:=X_i$ for any fixed $i\ge2$ and let  $F\sim \mathrm{Geom}(\pi)$. Then we have $\mathbb{E}[X]=\mathbb{E}[T]\cdot\mathbb{E}[F]$ 
 and
    $ \mathbb{E}[X^2]=\mathbb{E}\left[T^2\right] \mathbb{E}[F] + \mathbb{E}[T]^2\mathbb{E}\left[F^2\right]-\mathbb{E}[T]^2\mathbb{E}[F]$.
 \end{lem}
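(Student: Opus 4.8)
The plan is to express $X = X_i$ as a random sum of i.i.d. charging times and then apply Wald-type identities. Concretely, between the $(i-1)$'th and $i$'th successful transmission there are exactly $F = F_i \sim \mathrm{Geom}(\pi)$ full battery charges (the first $F-1$ of them corresponding to failed transmissions and the $F$'th to the successful one), and the time consumed by each such charge-and-transmit cycle is one independent copy of $T$ (one slot for the transmission itself is already folded into the definition of $T$, or handled by the $+1$ convention already in place). Hence I would write
\begin{equation}
X \;=\; \sum_{j=1}^{F} T^{(j)},
\end{equation}
where $T^{(1)}, T^{(2)}, \ldots$ are i.i.d.\ copies of $T$, independent of $F$. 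The key point justifying independence is that the charging dynamics in each inter-success period start from an empty battery (the previous transmission drained it) and the fading powers $n_t$ are independent across slots; moreover whether a given transmission succeeds depends only on the $\mathcal{S}$-$\mathcal{R}$ fading $m_t$, which is independent of the charging process, so $F$ is independent of the $T^{(j)}$'s.

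The first moment is then immediate from Wald's identity: $\mathbb{E}[X] = \mathbb{E}[F]\,\mathbb{E}[T]$. For the second moment I would condition on $F$ and use that, given $F = f$, $X$ is a sum of $f$ i.i.d.\ copies of $T$, so
\begin{equation}
\mathbb{E}[X^2 \mid F = f] \;=\; f\,\mathbb{E}[T^2] + f(f-1)\,\mathbb{E}[T]^2 .
\end{equation}
Taking expectation over $F$ gives $\mathbb{E}[X^2] = \mathbb{E}[F]\,\mathbb{E}[T^2] + (\mathbb{E}[F^2] - \mathbb{E}[F])\,\mathbb{E}[T]^2$, which is exactly the claimed formula. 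Alternatively one can invoke the Blackwell--Girshick identity for the variance of a random sum and rearrange; I would present whichever is cleaner, but the conditioning argument is the most transparent.

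The main obstacle is not the algebra but pinning down the independence/distribution claims precisely: one must argue carefully that each inter-success charging period genuinely restarts from $E=0$, that the per-cycle durations really are i.i.d.\ with the law of $T$ from Proposition~\ref{prop:T}, and that the count $F$ is independent of these durations and genuinely geometric with parameter $\pi$ (this last fact is asserted in the text but relies on the transmission successes being i.i.d.\ $\mathrm{Bern}(\pi)$ and independent of the harvesting channel). Once that modeling is nailed down, the moment computations are routine applications of Wald's identity and conditioning, and Fact~\ref{fact:F} together with Proposition~\ref{prop:T} can be substituted in if a fully explicit expression in $\beta$ and $\pi$ is wanted, though the lemma as stated only needs it in terms of the moments of $T$ and $F$.
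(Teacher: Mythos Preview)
Your proposal is correct and follows essentially the same approach as the paper: both write $X = T_1 + \cdots + T_F$ with i.i.d.\ $T_j$'s independent of $F$, condition on $F=f$, use $\mathbb{E}[(T_1+\cdots+T_f)^2] = f\,\mathbb{E}[T^2] + (f^2-f)\,\mathbb{E}[T]^2$, and take expectation over $F$. Your discussion of the independence justifications is more careful than the paper's, which simply asserts the decomposition without comment, but the underlying argument is identical.
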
 
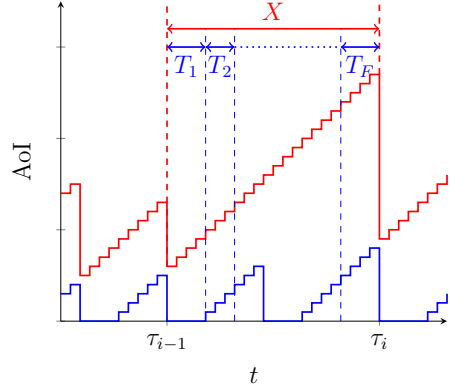
\begin{figure}[H]
	\centering
	\begin{tikzpicture}[scale=0.8]
	\pgfplotsset{width=8cm,compat=1.5}
	\begin{axis}[%
	,xlabel=$t$
	,ylabel=$\mathrm{AoI}$
	,ytick={}
	,yticklabels={}
	,axis x line = bottom,axis y line = left
	,ymin=0
	,ymax=35
	,xmax= 40
	,xmin= 0
	,xtick={11,33}
	,xticklabels={$\tau_{i-1}$,$\tau_i$},
	]
	\addplot+[blue , const plot, no marks, thick] coordinates {(0,3) (1,4) (2,0) (3,0) (4,0) (5,0) (6,1) (7 ,2) (8,3) (9,4) (10 ,5) (11,0) (12,0) (13,0 ) (14,0) (15 ,1) (16,2) (17 ,3) (18 ,4) (19,5) (20, 6) (21,0 ) (22,0)(23 , 0) (24,0)  (25,1) (26,2) (27,3) (28,4) (29,5) (30,6) (31 ,7) (32,8) (33,0) (34 ,0) (35, 0) (36,0) (37,0) (38,1) (39,2) (40,3)} node[above,pos=.57,black] {};
	\addplot+[red,const plot, no marks, thick] coordinates {(0,14) (1,15) (2,5) (3,6) (4,7) (5,8) (6,9) (7 ,10) (8,11) (9,12) (10 ,13) (11,6) (12,7) (13,8) (14,9) (15 ,10) (16,11) (17,12) (18,13) (19, 14) (20, 15) (21, 16) (22, 17) (23, 18) (24 , 19) (25,20) (26,21) (27,22) (28,23) (29,24) (30, 25) (31 , 26) (32, 27) (33,9) (34, 10) (35,11) (36,12) (37,13) (38,14) (39,15) (40,16) } node[above,pos=.57,black] {};
	
			\addplot+[dashed,red, const plot, no marks, thick] coordinates { (11,6) (11 , 35) } node[above,pos=.57,black] {};
			\addplot+[dashed,red, const plot, no marks, thick] coordinates { (33,27) (33 , 35) } node[above,pos=.57,black] {};

				\addplot [<->, red, thick] coordinates { (11,32) (33 , 32) }node[above,pos=.5,red] {$X$};
				
				\addplot [<->, blue, thick] coordinates { (11,30) (15 , 30) }node[below,pos=.5,blue] {$T_1$};
				
				\addplot+[dashed,blue, const plot, no marks] coordinates { (15,0) (15 , 32) } node[above,pos=.57,black] {};
				
				\addplot [<->, blue, thick] coordinates { (15,30) (18 , 30) }node[below,pos=.5,blue] {$T_2$};
				
				\addplot+[dashed,blue, const plot, no marks] coordinates { (18,0) (18 , 32) } node[above,pos=.57,black] {};
				
				\addplot [<->, blue, thick] coordinates { (29,30) (33 , 30) }node[below,pos=.5,blue] {$T_F$};
				
				\addplot+[dashed,blue, const plot, no marks] coordinates { (29,0) (29 , 32) } node[above,pos=.57,black] {};
				
					\addplot+[dotted,blue, const plot, no marks,thick] coordinates { (18,30) (29 , 30) } node[above,pos=.57,black] {};

	\end{axis}
	\end{tikzpicture}
	\caption{An illustration of the notations used in the proof of Lemma~\ref{lem:XX2}.}\label{fig:proofX}
\end{figure}
 \begin{proof}
	Let $F$ denote the random variable corresponding to the number of transmissions (or equivalently the number of battery recharges) between the $(i-1)$'th and the $i$'th successful transmissions. Also, for each $j\in [F]$, let $T_j$ denote the time that it takes for the $j$'th battery recharge. Now, if we define $X:=\tau_i-\tau_{i-1}$, then $X=T_1+T_2+\ldots+T_F$ (see Figure~\ref{fig:proofX}). Thus, for the expectation we have
	\begin{align}
	\mathbb{E}[X]&=\sum_{f=1}^{\infty}\mathbb{E}[T_1+T_2+\ldots+T_F|F=f]\Pr[F=f]\nonumber\\
	&=\sum_{f=1}^{\infty}\mathbb{E}[T_1+T_2+\ldots+T_f]\Pr[F=f]\nonumber\\
	&=\sum_{f=1}^{\infty} \mathbb{E}[T]\cdot f\cdot \Pr[F=f]\nonumber\\
	&=\mathbb{E}[T] \mathbb{E}[F].
	\end{align}
	Similarly, we have 
	\begin{align}
	\mathbb{E}\left[X^2\right]&=\sum_{f=1}^{\infty}\mathbb{E}\left[\left(T_1+T_2+\ldots+T_F\right)^2|F=f\right]\Pr[F=f]\nonumber\\
	&=\sum_{f=1}^{\infty}\mathbb{E}\left[\left(T_1+T_2+\ldots+T_f\right)^2\right]\Pr[F=f]\nonumber\\
	&= \sum_{f=1}^{\infty}\left(\mathbb{E}\left[T^2\right]\cdot f + \mathbb{E}[T]^2 \cdot (f^2-f)\right)\Pr[F=f]\nonumber\\
	&=\mathbb{E}\left[T^2\right] \mathbb{E}[F] + \mathbb{E}[T]^2\mathbb{E}\left[F^2\right]-\mathbb{E}[T]^2\mathbb{E}[F].
	\end{align}
\end{proof}

We denote the area under the AoI curve from $(i-1)$'th successful transmission to the $i$'th successful transmission by $A_i$. We want to calculate the expected value of $A_i$ for any $i\ge2$. Fix $i$, and for simplicity of notation let $A:=A_i$. As in Figure~\ref{Fig:AoI}, one can see that the area corresponding to $A$ can be divided into a rectangle  and a triangle\footnote{Note that this region is not  exactly a triangle because of discrete time slots that we use (see Remark~\ref{rem:AoI-disc})}, which their area are denoted by $U$ and $V$, respectively. Note that by linearity of expectation, we have 
$    \mathbb{E}[A]=\mathbb{E}[U]+\mathbb{E}[V]$.

The following lemma gives the expression for $\mathbb{E}[V]$.
\begin{lem} \label{lem:V}
$
\mathbb{E}[V]=\frac{1}{2}\left(\mathbb{E}[X^2]+\mathbb{E}[X]\right).
$
\end{lem}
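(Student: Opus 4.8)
The plan is to express $V$ exactly as a deterministic function of the single random variable $X = X_i$ and then take expectations term by term. Recall from Figure~\ref{Fig:AoI} that $V$ is the area of the (discrete) ``triangular'' region that sits on top of the rectangle $U$ of height $H_i$. Between the $(i-1)$'th and $i$'th successful receptions the receiver holds the same message, so its AoI increases by exactly one in each of the $X$ time slots that make up the interval $[\tau_{i-1},\tau_i)$. Subtracting the constant baseline $H_i$ that is already accounted for in $U$, the remaining contributions over these $X$ slots are $1,2,\ldots,X$.

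Concretely, I would first record the pointwise identity
\begin{align}
V = \sum_{j=1}^{X} j = \frac{X(X+1)}{2} = \frac{X^2 + X}{2},
\end{align}
valid for every realization; the discreteness of the time slots is precisely what turns the naive continuous area $X^2/2$ into $X(X+1)/2$ (cf. Remark~\ref{rem:AoI-disc}). Applying linearity of expectation then gives
\begin{align}
\mathbb{E}[V] = \frac{1}{2}\left(\mathbb{E}[X^2] + \mathbb{E}[X]\right),
\end{align}
which is the claimed formula, and $\mathbb{E}[X]$, $\mathbb{E}[X^2]$ are already supplied by Lemma~\ref{lem:XX2} (so no further computation is needed once the identity above is justified).

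The only point requiring real care is the endpoint bookkeeping: one must verify that the interval between two consecutive successful receptions contains exactly $X$ slots in which the age rises above the baseline by $1,2,\ldots,X$ — in particular that the slot of the $(i-1)$'th reception contributes the term $1$ and the slot immediately before the $i$'th reception contributes the term $X$ — so that the sum really runs from $1$ to $X$ rather than from $0$ to $X-1$ or from $1$ to $X-1$. This is exactly the off-by-one that the definition $H_i := \tau_{i-1} - \tau'_{i-1} - 1$ is arranged to absorb, and it can be read off from the worked instance in Figure~\ref{Fig:AoI}. I therefore expect this to be the main (and essentially the only) obstacle; everything else is linearity of expectation.
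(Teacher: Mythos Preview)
Your proposal is correct and follows exactly the paper's own argument: write $V=\sum_{j=1}^{X}j=\tfrac{X(X+1)}{2}$ and take expectations. The extra endpoint discussion you add is not needed in the paper's version, but it does no harm.
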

\begin{proof}
Let $X$ be the random variable that corresponds to the amount of time between $(i-1)$'th successful transmission and $i$'th successful transmission, i.e., $X:=\tau_i-\tau_{i-1}$. Then by the definition of $V$, we have the following:
$
    V=\sum_{j=1}^{X} j = \frac{X(X+1)}{2},
$
which translates to
$
    \mathbb{E}[V]=\mathbb{E}\left[ \frac{X(X+1)}{2}\right]= \frac{1}{2}\left(\mathbb{E}[X^2]+\mathbb{E}[X]\right).
$
This concludes the proof. 
\end{proof}
Now we calculate the expected area of the rectangle part, i.e., $\mathbb{E}[U]$. As per Figure~\ref{Fig:AoI}, one should note that the random variables corresponding to the vertical and horizontal sides of the rectangle are independent of each other, since each transmission is successful with probability $\pi$, independent of the previous transmissions. As mentioned before, the random variable corresponding to the horizontal side is $X$, and we denote the random variable corresponding to the length of the vertical side by $H$. Suppose that $T'_1, T'_2, \ldots, T'_z$ are the random variables corresponding to the time for each battery recharge used for transmitting $m_{i-1}$ (the $(i-1)$'th successful message), where $z$ is an integer such that $z$'th transmission of $m_{i-1}$ is successful (and the previous ones have failed). Now, since the sensor updates its status when its battery is recharged for the first time (after a successful transmission or after giving up on the previous status), then  $H_i:=T'_2+\ldots+T'_z$. Thus we have
\begin{align}\label{eq:U}
    \mathbb{E}[U]=\mathbb{E}[H\cdot X]=\mathbb{E}[H]\cdot\mathbb{E}[X],
\end{align}
since $H$ and $X$ are independent random variables. 
In the following lemma, we calculate $\mathbb{E}[H]$.
\begin{lem}\label{lem:Hdet}
If we use the deterministic scheme (Algorithm~\ref{alg:k-LTS}), then we have 
\begin{align}
    \mathbb{E}[H]= \mathbb{E}[T]\cdot \left(\frac{1}{\pi}-\frac{k(1-\pi)^k}{1-(1-\pi)^k}-1\right).
\end{align}
\end{lem}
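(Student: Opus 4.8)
The plan is to write $H$ as a random sum of i.i.d.\ battery-recharge times and compute the two ingredients of that sum separately. By the discussion preceding the lemma, if $z$ denotes the number of times the $(i-1)$'th successful status $m_{i-1}$ is transmitted (so the $z$'th transmission succeeds and the first $z-1$ fail), and $T'_2,\dots,T'_z$ are the corresponding recharge times, then $H=T'_2+\dots+T'_z$. Each $T'_j$ is the time to fill an empty battery, hence an i.i.d.\ copy of $T$, and it depends only on the harvesting channel $\mathcal{E}$--$\mathcal{S}$, whereas $z$ is a function of the transmission channel $\mathcal{S}$--$\mathcal{R}$ only; since the two channels are physically distinct (different frequency bands, see Section~\ref{sec:sysmodel}), $z$ is independent of $(T'_2,T'_3,\dots)$. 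Conditioning on $z$ therefore gives $\mathbb{E}[H]=\sum_{m\ge1}(m-1)\,\mathbb{E}[T]\,\Pr[z=m]=\mathbb{E}[T]\,(\mathbb{E}[z]-1)$, so it remains to identify the law of $z$ and compute its mean.

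Next I would pin down the distribution of $z$. The statuses are handled one after another, and each one independently succeeds within its allotted $k$ transmissions with probability $q:=1-(1-\pi)^k$; when it does, it used exactly $j$ transmissions with probability $(1-\pi)^{j-1}\pi$ for $j\in[k]$. Whether a given status happens to be the $(i-1)$'th \emph{successful} status depends only on the success/failure outcomes of the earlier statuses and of itself, not on how many transmissions it took given that it eventually succeeded; consequently the number of transmissions of $m_{i-1}$ is a geometric variable truncated at $k$ and conditioned on success, i.e.\ $\Pr[z=j]=\frac{(1-\pi)^{j-1}\pi}{1-(1-\pi)^k}$ for $j=1,\dots,k$.

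The remaining computation is short. Using the identity $\sum_{j=1}^{k} j x^{j-1}=\frac{1-(k+1)x^k+kx^{k+1}}{(1-x)^2}$ with $x=1-\pi$,
\[
\mathbb{E}[z]=\frac{\pi}{q}\sum_{j=1}^{k} j(1-\pi)^{j-1}=\frac{\pi}{q}\cdot\frac{1-(k+1)(1-\pi)^k+k(1-\pi)^{k+1}}{\pi^2}.
\]
Factoring the numerator as $1-(1-\pi)^k(1+k\pi)=q-k\pi(1-\pi)^k$ gives $\mathbb{E}[z]=\frac{1}{\pi}-\frac{k(1-\pi)^k}{1-(1-\pi)^k}$, and substituting into $\mathbb{E}[H]=\mathbb{E}[T](\mathbb{E}[z]-1)$ yields exactly the claimed expression. (As a sanity check, letting $k\to\infty$ recovers $\mathbb{E}[H]=\mathbb{E}[T](\tfrac1\pi-1)$, consistent with $z\sim\mathrm{Geom}(\pi)$ in the unbounded case.)

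The main obstacle is not the algebra but the probabilistic justification of the law of $z$: one must argue carefully that conditioning on ``$m_{i-1}$ is the $(i-1)$'th successful status'' does not bias the number of transmissions beyond the truncation-and-conditioning described, which rests on the independence of transmission outcomes across statuses. A secondary point worth stating explicitly is the independence of $z$ from the recharge times $T'_j$, which is what licenses the clean factorization $\mathbb{E}[H]=\mathbb{E}[T]\,\mathbb{E}[z-1]$ via Wald's identity.
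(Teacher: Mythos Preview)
Your proof is correct and follows essentially the same route as the paper: both identify the number of transmissions of the $(i-1)$'th successful message as a geometric variable truncated at $k$ and conditioned on success, then factor $\mathbb{E}[H]=\mathbb{E}[T]\,(\mathbb{E}[z]-1)$ via independence of recharge times and transmission outcomes. The only differences are cosmetic: you invoke Wald's identity and the closed-form $\sum_{j=1}^{k} j x^{j-1}=\frac{1-(k+1)x^k+kx^{k+1}}{(1-x)^2}$, whereas the paper conditions on $Z$ directly and relegates the sum $\sum_{j=1}^{k}(j-1)(1-\pi)^{j-1}$ to a separate claim proved by a double-sum manipulation; you are also more explicit about why conditioning on being the $(i-1)$'th success does not further bias $z$, a point the paper leaves implicit.
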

\begin{proof}
For a fixed $i\ge2$, let $m$ be the $(i-1)$'th successful message and let $Y$ be a random variable such that $Y\sim \mathrm{Geom}(\pi)$. Now let random variable $Z$ be such that
\begin{align}
    \forall j\in [k]: \Pr[Z=j]=\Pr[Y=j|Y\le k].
\end{align} 
One can see that $Z$ is the distribution of number of transmissions of message $m$, since the condition $Y\le k$ guarantees that this message is successful. Now, let $T'_1$, $T'_2$, \ldots, $T'_Z$ denote the amount of time that it takes for $Z$ full battery charges used in the transmissions of $m$. Now, we can calculate the expected value of $H$ as follows:
\begin{align}
    \mathbb{E}[H] &= \mathbb{E}[T'_2+\ldots+T'_Z]\nonumber\\
    &= \sum_{j=1}^{k} \mathbb{E}[T'_2+\ldots+T'_Z|Z=j]\Pr[Z=j]\nonumber\\
     &= \sum_{j=1}^{k} \mathbb{E}[T'_2+\ldots+T'_j]\Pr[Y=j|Y\le k]\nonumber\\
     &=\mathbb{E}[T] \cdot \left(\sum_{j=1}^{k} (j-1)\cdot \Pr[Y=j|Y\le k]\right)\nonumber\\
     &=\mathbb{E}[T] \cdot \left(\sum_{j=1}^{k} (j-1)\cdot \frac{(1-\pi)^{j-1}\pi}{1-(1-\pi)^{k}}\right)\nonumber\\
     &= \mathbb{E}[T]\cdot \left(\frac{1}{\pi}-\frac{k(1-\pi)^k}{1-(1-\pi)^k}-1\right),
\end{align}
where the last transition is by Claim~\ref{claim:arith} (see Appendix~\ref{sec:omitted}).
\end{proof}
Now, we can combine the results and get the following result on the AoI of the deterministic scheme. 
\begin{thm}\label{thm:det}
For any $\delta> 0$ such that $\delta\le 1-\pi$, there exist a deterministic scheme which achieves the following average AoI
\begin{align}
    \mathrm{AoI}_{\mathrm{det}}=(1+\beta)\left(\frac{2}{\pi}-\frac{k(1-\pi)^k}{1-(1-\pi)^k}-\frac{3}{2}\right)+\frac{2\beta+1}{2(1+\beta)},
\end{align}
where $k=\lceil \log_{1-\pi}\delta\rceil$. Furthermore, any status is received by the receiver with probability at least $1-\delta$.
\end{thm}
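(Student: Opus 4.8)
The statement bundles two assertions about Algorithm~\ref{alg:k-LTS}: the reliability guarantee and the closed form for its average AoI. The reliability part I would dispatch first, since it is immediate. Because $1-\pi\in(0,1)$, the map $x\mapsto(1-\pi)^x$ is strictly decreasing, so the hypothesis $\delta\le 1-\pi$ forces $k=\lceil\log_{1-\pi}\delta\rceil\ge 1$, and $k\ge\log_{1-\pi}\delta$ gives $(1-\pi)^k\le\delta$. Since the $k$ attempts allotted to any status succeed independently with probability $\pi$ each, the sensor gives up on a status (all $k$ attempts fail) with probability exactly $(1-\pi)^k\le\delta$, so every status reaches the receiver with probability at least $1-\delta$.

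For the AoI expression the plan is the usual renewal--reward reduction. The successful-reception epochs $\tau_1<\tau_2<\cdots$ split the time axis into cycles, and over the $i$-th cycle the AoI sawtooth has area $A_i=U_i+V_i$, where $U_i$ is the rectangle of width $X_i$ and height $H_i$ and $V_i$ is the staircase under the rising part. Writing $\int_0^T\mathrm{AoI}(t)\,dt$ as the sum of the areas of the completed cycles plus two boundary terms (the transient before $\tau_1$ and the partial cycle containing $T$), dividing by $T$, and letting $T\to\infty$, the boundary terms wash out and one obtains $\mathrm{AoI}_{\mathrm{det}}=\mathbb{E}[A]/\mathbb{E}[X]$, where $A:=A_i$ and $X:=X_i$ for a fixed $i\ge 2$.

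It then remains to assemble the pieces already proved. By linearity $\mathbb{E}[A]=\mathbb{E}[U]+\mathbb{E}[V]$, and by~\eqref{eq:U} and Lemma~\ref{lem:V} we have $\mathbb{E}[U]=\mathbb{E}[H]\,\mathbb{E}[X]$ and $\mathbb{E}[V]=\tfrac{1}{2}(\mathbb{E}[X^2]+\mathbb{E}[X])$, so
\[
\mathrm{AoI}_{\mathrm{det}}=\frac{\mathbb{E}[U]+\mathbb{E}[V]}{\mathbb{E}[X]}=\mathbb{E}[H]+\frac{\mathbb{E}[X^2]}{2\,\mathbb{E}[X]}+\frac{1}{2}.
\]
Substituting $\mathbb{E}[T]=1+\beta$ and $\mathbb{E}[T^2]=1+3\beta+\beta^2$ (Proposition~\ref{prop:T}), together with $\mathbb{E}[F]=1/\pi$ and $\mathbb{E}[F^2]=(2-\pi)/\pi^2$ (Fact~\ref{fact:F}), into Lemma~\ref{lem:XX2} yields $\mathbb{E}[X]=(1+\beta)/\pi$ and $\mathbb{E}[X^2]=\frac{1+3\beta+\beta^2}{\pi}+\frac{2(1-\pi)(1+\beta)^2}{\pi^2}$, while Lemma~\ref{lem:Hdet} gives $\mathbb{E}[H]=(1+\beta)\left(\frac{1}{\pi}-\frac{k(1-\pi)^k}{1-(1-\pi)^k}-1\right)$. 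Using $1+3\beta+\beta^2=(1+\beta)^2+\beta$ one checks $\frac{\mathbb{E}[X^2]}{2\mathbb{E}[X]}=\frac{1+\beta}{\pi}-\frac{1+\beta}{2}+\frac{\beta}{2(1+\beta)}$; then adding $\mathbb{E}[H]+\frac{1}{2}$ and collecting the multiples of $(1+\beta)$ produces the coefficient $\frac{2}{\pi}-\frac{k(1-\pi)^k}{1-(1-\pi)^k}-\frac{3}{2}$, while the leftover constants collapse to $\frac{\beta}{2(1+\beta)}+\frac{1}{2}=\frac{2\beta+1}{2(1+\beta)}$, which is exactly the claimed formula (with $k=\lceil\log_{1-\pi}\delta\rceil$).

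The only step that needs genuine care is the renewal--reward reduction: I would have to verify that the cycle lengths $X_i$ form an i.i.d.\ sequence and that the rewards $A_i$ still obey a law of large numbers even though $A_i$ couples cycle $i-1$ (through $H_i$) with cycle $i$ (through $X_i$) and is therefore only finitely dependent rather than i.i.d., and that the $t=0$ transient and the final partial cycle contribute $o(T)$. Everything past that point is the routine algebra sketched above.
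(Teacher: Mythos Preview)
Your proposal is correct and follows essentially the same route as the paper: the reliability argument is identical, and for the AoI you invoke the same renewal--reward identity $\mathrm{AoI}_{\mathrm{det}}=\mathbb{E}[A]/\mathbb{E}[X]$, split $A=U+V$, and plug in Lemma~\ref{lem:V}, \eqref{eq:U}, Lemma~\ref{lem:Hdet}, Lemma~\ref{lem:XX2}, Proposition~\ref{prop:T}, and Fact~\ref{fact:F} before doing the same algebraic simplification. If anything you are more careful than the paper, which simply cites \cite{AoI-main} for $\mathbb{E}[A]/\mathbb{E}[X]$, whereas you flag the finite dependence between $A_i$ and the need to control boundary terms.
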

\begin{proof}
First, we prove the reliability guarantee. Note that a status is not received by the receiver if it fails $k$ transmissions, where $k=\lceil \log_{1-\pi}\delta\rceil$ (see Line~\ref{line:k-det} of Algorithm~\ref{alg:k-LTS}). Since the probability of failure of each transmission is $1-\pi$ then for any status $s$, 
\begin{align}
    \Pr[\text{status $s$ fails}]=(1-\pi)^k \le \delta,
\end{align}
where the last inequality holds since $k=\lceil \log_{1-\pi}\delta\rceil$.

Now, we prove the expression for Average AoI of this scheme. Note that we have
$$
    \mathrm{AoI}_{\mathrm{det}}= \frac{\mathbb{E}[A]}{\mathbb{E}[X]}
    =\frac{\mathbb{E}[V]+\mathbb{E}[U]}{\mathbb{E}[X]}
    =\frac{\frac{1}{2}\left(\mathbb{E}[X^2]+\mathbb{E}[X]\right)+\mathbb{E}[U]}{\mathbb{E}[X]},
$$
where the last equality is due to Lemma~\ref{lem:V} (and see the discussion in \cite{AoI-main} for the standard argument of the first equality).
Then, we have
\begin{align}
    \mathrm{AoI}_{\mathrm{det}}&=\frac{1}{2}\left(\frac{\mathbb{E}[X^2]}{\mathbb{E}[X]}+1\right)+\mathbb{E}[T]\left(\frac{1}{\pi}-\frac{k(1-\pi)^k}{1-(1-\pi)^k}-1\right)\nonumber\\
    &=\left(\frac{2-\pi}{2\pi}\right)(\beta+1)+\frac{2\beta+1}{2(1+\beta)}+\left(1+\beta\right)\left(\frac{1}{\pi}-\frac{k(1-\pi)^k}{1-(1-\pi)^k}-1\right)\nonumber\\
    &=(1+\beta)\left(\frac{2}{\pi}-\frac{k(1-\pi)^k}{1-(1-\pi)^k}-\frac{3}{2}\right)+\frac{2\beta+1}{2(1+\beta)},
\end{align}
where the first equality is by Lemma~\ref{lem:Hdet} and \eqref{eq:U}. The second equation is due to Proposition~\ref{prop:T}, Lemma~\ref{lem:XX2} and Fact~\ref{fact:F}, and the third equality is by reordering the terms. 
\end{proof}
\begin{rem}
Theorem~\ref{thm:det} recovers the result of \cite{krikidis2019average}, as a special case, when one sets $k=1$. Recall that in that case, the reliability guarantee of the system is $\pi$.  
\end{rem}

One can see Theorem~\ref{thm:det} as a trade off between reliability and average AoI. Moreover, note that the blue curve in Figure~\ref{fig:tradeoff}, which presents the trade off curve for the setting where $\beta=87$ and $\pi=0.65$, has discontinuities. The deterministic approach of setting a fix $k$ as maximum number of attempts, in order to achieve error probability of at most $\delta$, is wasteful. More specifically, the source of inefficiency in Algorithm~\ref{alg:k-LTS} comes from the rounding in Line~\ref{line:k-det} of Algorithm~\ref{alg:k-LTS}. Note that for $k=\lceil \log_{1-\pi}\delta\rceil $, we have that
$
    (1-\pi)^{k} \le \delta < (1-\pi)^{k-1}.
$

In the next subsection, we modify the scheme and use randomness to achieve a better average AoI with error probability of at most $\delta$ for each status, which provides us smooth average AoI-reliability trade offs.

\subsection{Randomized scheme}\label{sec:rand}
In Algorithm~\ref{alg:rand}, we use randomness in determining the maximum number of trials for each status. For each status $s$, we will choose this limit, $\wt{k}$, with proper probability between $k$ and $k-1$, so that the error probability becomes $\delta$ (see Line~\ref{line:bern} in Algorithm~\ref{alg:rand}). Needless to say that when $k=1$, error probability of $1-\pi$ cannot be increased to $\delta$ in the hope of getting better AoI, as any status has to be sent at least once. Below we present our randomized scheme in Algorithm~\ref{alg:rand}. 

\begin{algorithm}[H]
	\caption{\textsc{Randomized Scheme}: A scheme that achieves $(1-\delta)$-Reliability in a randomized fashion.}  
	\label{alg:rand} 
	\begin{algorithmic}[1]
		\State Wait until the battery gets fully charged. \Comment{Energy harvesting} \label{line:1-rand}

		\If{the last transmission was successful or $\ell =\wt{k}$}\newline
		\nonumber \text{~}\Comment{The feedback channel is used to determine the success of the last transmission}
		\newline \nonumber\text{} \Comment{We assume that when we initialize the algorithm $\ell=\wt{k}=0$, so the if statement holds}
		\State $\ell\gets 0$
		\State Sense a new status and update status $s$
		\State $k\gets \lceil \log_{1-\pi}\delta\rceil$ \label{line:k-rand}
		\If {$k>1$} 
		\State $p_1 \gets  (1-\pi)^{k}$,~~$p_2 \gets  (1-\pi)^{k-1}$
		
		\State $\wt{k} \gets  \mathrm{Bern}\left(\frac{\delta-p_2}{p_1-p_2}\right)+(k-1)$
		\label{line:bern}
		
		\Else 
		\State $\wt{k}\gets 1$ \label{line:ktilde1} 
		\EndIf
		
		\EndIf 
		\State Use the battery charge to send $s$ to the receiver \label{line:send-rand}
		\State $\ell \gets \ell+1$ 
		\State Repeat
	\end{algorithmic}
\end{algorithm}

Now, we calculate the average AoI of the scheme presented in Algorithm~\ref{alg:rand}. Suppose that $A:=A_i$ is the area under the AoI-time curve between $\tau_{i-1}$ and $\tau_i$ as described in the previous section. Again, the same as the previous section, we divide $A$ into two parts $U$ and $V$, where $U$ represents the area of the rectangle part, and $V$ is the area of the triangle part. One can easily check that since the area of $V$ only depends on the time that it takes to send any successful message to the receiver, the exact same result that we had for $\mathbb{E}[V]$ in the previous section holds, i.e.,
\begin{align}
    \mathbb{E}[V]=\frac{1}{2}\left(\mathbb{E}[X^2]+\mathbb{E}[X]\right).
\end{align}

Now, we analyze $\mathbb{E}[U]$, which needs different techniques for this scheme compared to the scheme of the previous section, and gives the improvement. 

Here, it is useful to remind the reader about the randomized scheme that we use. Our algorithm after any successful transmission, updates its status and then in Line~\ref{line:bern} of Algorithm~\ref{alg:rand} decides an upper-bound on the number of trials of this status, and if the number of trials exceeds that limit, then updates its status and then again chooses an independent upper-bound on the number of trials for this message  and so on. As mentioned above, $\mathbb{E}[X]$ is exactly the same as previous section, however, we need to calculate $\mathbb{E}[H]$ for this scheme. The next lemma is the main technical challenge of this section. 

\begin{lem}\label{lem:rand_H}
 For the randomized scheme (Algorithm~\ref{alg:rand}) if we let $\alpha:=\frac{\delta-p_2}{p_1-p_2}$, then for any $i>2$, we have 
 $
     \mathbb{E}[H_i]=p\cdot h_1 + (1-p)\cdot h_2
$
 where 
 \begin{itemize}
     \item $p=\frac{\alpha\left(1-(1-\pi)^k\right)}{1-\alpha(1-\pi)^k-(1-\alpha)(1-\pi)^{k-1}}$,
     \item $h_1=\mathbb{E}[T] \cdot \left(\frac{1}{\pi}-\frac{k(1-\pi)^k}{1-(1-\pi)^k}-1\right)$,
     \item $h_2=\mathbb{E}[T] \cdot \left(\frac{1}{\pi}-\frac{(k-1)(1-\pi)^{k-1}}{1-(1-\pi)^{k-1}}-1\right)$.
 \end{itemize}
\end{lem}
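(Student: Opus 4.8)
The plan is to reduce everything to Lemma~\ref{lem:Hdet} by conditioning on the (random) per-status trial cap $\wt{k}$, after first establishing that, \emph{conditioned on the round producing a successful message}, this cap equals $k$ with probability exactly $p$ rather than the prior $\alpha$. Throughout I assume the non-degenerate case $k>1$, where $\alpha=\frac{\delta-p_2}{p_1-p_2}$ is defined.

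First I would set up the renewal picture. Partition an execution of Algorithm~\ref{alg:rand} into \emph{rounds}: a round begins when a fresh status is sensed, comes with an independent cap $\wt{k}_j\in\{k-1,k\}$ drawn in Line~\ref{line:bern} with $\Pr[\wt{k}_j=k]=\alpha$, and consists of at most $\wt{k}_j$ transmissions; the round is \emph{successful} if one of these transmissions goes through, and is a \emph{give-up} round otherwise. Since each transmission succeeds with probability $\pi$ independently, and each cap and each recharge time is independent, the tuples $\bigl(\wt{k}_j,\ \text{success indicator}_j,\ \text{number of transmissions}_j,\ \text{recharge times}_j\bigr)$ are mutually independent across rounds. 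The $(i-1)$'th successful message $m_{i-1}$ is produced by the $(i-1)$'th successful round; denote its cap by $\wt{k}^\star$, its number of transmissions by $Z$, and the times of the $Z$ full recharges used for it by $T'_1,\dots,T'_Z$. Exactly as in Section~\ref{sec:det}, $H_i=T'_2+\dots+T'_Z$.

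The key step is identifying the conditional law of $\wt{k}^\star$. Because the rounds are i.i.d.\ and ``successful'' is a property of a single round, and the index of the $n$'th successful round depends only on the success indicators of earlier rounds (not on their caps or internal timings), the cap of the $n$'th successful round has, for every $n$, the conditional distribution of $\wt{k}_1$ given that round $1$ is successful. By Bayes' rule, with $\Pr[\text{round successful}\mid\wt{k}_1=k]=1-(1-\pi)^k$ and $\Pr[\text{round successful}\mid\wt{k}_1=k-1]=1-(1-\pi)^{k-1}$,
\[
\Pr[\wt{k}^\star=k]=\frac{\alpha\bigl(1-(1-\pi)^k\bigr)}{\alpha\bigl(1-(1-\pi)^k\bigr)+(1-\alpha)\bigl(1-(1-\pi)^{k-1}\bigr)},
\]
and since $\alpha+(1-\alpha)=1$ the denominator rewrites as $1-\alpha(1-\pi)^k-(1-\alpha)(1-\pi)^{k-1}$, so $\Pr[\wt{k}^\star=k]=p$. (The restriction $i>2$ rather than $i\ge2$ just lets us ignore the transient at initialization, where $E_0=0$ and $\ell=\wt{k}=0$, so that the round producing $m_{i-1}$ is a steady-state round.)

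Finally I would condition on $\wt{k}^\star$ and invoke Lemma~\ref{lem:Hdet}. Given $\wt{k}^\star=k$, the number $Z$ of transmissions of $m_{i-1}$ has precisely the distribution $\Pr[Z=j]=\Pr[Y=j\mid Y\le k]$ with $Y\sim\mathrm{Geom}(\pi)$ used in the proof of Lemma~\ref{lem:Hdet} (the event $Y\le k$ being exactly ``the round succeeds within its cap''), and the recharge times are i.i.d.\ copies of $T$ independent of $Z$; hence $\mathbb{E}[H_i\mid\wt{k}^\star=k]=h_1$ by the identical computation (using Claim~\ref{claim:arith}). The same argument with $k$ replaced by $k-1$ gives $\mathbb{E}[H_i\mid\wt{k}^\star=k-1]=h_2$. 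The law of total expectation then yields $\mathbb{E}[H_i]=p\,h_1+(1-p)\,h_2$. The main obstacle is the size-biasing in the middle step: a naive calculation gives $\alpha h_1+(1-\alpha)h_2$, but observing a \emph{successful} message tilts the posterior toward the larger cap $k$ (which is more likely to succeed), so the correct weight is $p$, not $\alpha$; making the ``$n$'th successful round looks like a fresh round conditioned on success'' claim rigorous is where the care is needed, and everything after it is a direct appeal to Lemma~\ref{lem:Hdet}.
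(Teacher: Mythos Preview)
Your proposal is correct and follows essentially the same approach as the paper: condition on the cap $\wt{k}^\star$ of the $(i-1)$'th successful message, compute $\mathbb{E}[H_i\mid\wt{k}^\star]$ via the truncated-geometric calculation of Lemma~\ref{lem:Hdet} to obtain $h_1,h_2$, and weight by $p=\Pr[\wt{k}^\star=k]$. The only cosmetic difference is in deriving $p$: the paper sets up a one-step recursion (first status after $\tau_{i-2}$ either succeeds or the process restarts by memorylessness), whereas you invoke the i.i.d.\ round structure and Bayes' rule directly; both yield the same expression, and your size-biasing remark is exactly the paper's ``subtle point'' that $p\neq\alpha$.
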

\begin{proof}
First, let $m'$ be the $(i-1)$'th successful message and let $k'$ be the maximum number of repetitions determined by Line~\ref{line:bern} of Algorithm~\ref{alg:rand} for message $m'$. Note, either $k'=k$ or $k'=k-1$. Also, let $\tau:=\tau'_{i-1}$ and $H:=H_i$ for ease of notation. Now, we can write:
\begin{align}
    \mathbb{E}[H] &= \Pr[k'=k]\cdot\mathbb{E}[H | k' = k]  +\Pr[k'=k-1]\cdot \mathbb{E}[H|k'=k-1]\label{eq:expHcond}
\end{align}
Let $Y$ be a random variable such that $Y\sim \text{Geom}(\pi)$. Then, we have 
\begin{align}
    \mathbb{E}[H|k'=k] &=\sum_{i=1}^{k}\mathbb{E}[T_2+\ldots+T_Y|Y=i,k'=k]\Pr[Y=i| Y \le k' ,k'=k]\nonumber\\
    &=\sum_{i=1}^{k} \mathbb{E}[T_2+\ldots+T_i]\Pr[Y=i| Y \le k]\nonumber\\
    &= \sum_{i=1}^{k} \mathbb{E}[T]\cdot (i-1)\cdot\Pr[Y=i| Y \le k]\nonumber\\
    &= \mathbb{E}[T]\cdot \sum_{i=1}^{k}  (i-1)\cdot \frac{(1-\pi)^{i-1}\pi}{1-(1-\pi)^k}\nonumber\\
    &=\mathbb{E}[T] \cdot \left(\frac{1}{\pi}-\frac{k(1-\pi)^k}{1-(1-\pi)^k}-1\right),
\end{align}
where the last transition is by Claim~\ref{claim:arith} (see Appendix~\ref{sec:omitted}). 
Similarly, we have
\begin{align}
    \mathbb{E}[H|k'=k-1] 
    &=\mathbb{E}[T] \cdot \left(\frac{1}{\pi}-\frac{(k-1)(1-\pi)^{k-1}}{1-(1-\pi)^{k-1}}-1\right).
\end{align}
If we let $h_1:=\mathbb{E}[T] \cdot \left(\frac{1}{\pi}-\frac{k(1-\pi)^k}{1-(1-\pi)^k}-1\right)$ and $h_2:=\mathbb{E}[T] \cdot \left(\frac{1}{\pi}-\frac{(k-1)(1-\pi)^{k-1}}{1-(1-\pi)^{k-1}}-1\right)$ then using \eqref{eq:expHcond} we have
$
    \mathbb{E}[H]= \Pr[k'=k]\cdot h_1+ \Pr[k'=k-1]\cdot h_2.
$

Now it remains to calculate $\Pr[k'=k]$, which is the probability that the scheme decides to upper bound the maximum transmission of message $m'$ by $k$ (and not $k-1$). One should note the subtle point here, which is this probability is different from the probability of setting $\wt{k}$ for each message in the algorithm. The reason is, it is conditioned that message $m'$ is successful. In other words, suppose that after successfully sending the $(i-2)$'th message, the sensor updates its status to $m_1$ and sets $\wt{k}_1:=\wt{k}(m_1)$ in Line~\ref{line:bern} of Algorithm~\ref{alg:rand}. Now, let $W=1$ if message $m_1$ is successful and $W=0$ otherwise. Thus,
\begin{align}
    \Pr[k'=k]= &\Pr[k'=k|\wt{k}_1=k, W=1] \Pr[\wt{k}_1=k, W=1]\nonumber\\
    &+\Pr[k'=k|\wt{k}_1=k, W=0] \Pr[\wt{k}_1=k, W=0]\nonumber\\
    &+\Pr[k'=k|\wt{k}_1=k-1, W=1] \Pr[\wt{k}_1=k-1, W=1]\nonumber\\
    &+\Pr[k'=k|\wt{k}_1=k-1, W=0] \Pr[\wt{k}_1=k-1, W=0].
\end{align}
Furthermore, since geometric distribution is memory-less 
\begin{align}
     \Pr[k'=k]= &1\cdot \Pr[\wt{k}_1=k, W=1]
    +\Pr[k'=k]\cdot\left(\Pr[\wt{k}_1=k, W=0]
    + \Pr[\wt{k}_1=k-1, W=0]\right)\nonumber\\
    = & \alpha\left(1-(1-\pi)^k\right) + \Pr[k'=k]\cdot \left(\alpha\cdot (1-\pi)^k+ (1-\alpha)(1-\pi)^{k-1}\right).
\end{align}
Thus, we have 
$
    \Pr[k'=k]=\frac{\alpha\left(1-(1-\pi)^k\right)}{1-\alpha(1-\pi)^k-(1-\alpha)(1-\pi)^{k-1}}
$
where $\alpha=\Pr[\wt{k}=k]$ for any message as assumed in the statement of the lemma. Now, if we let $p:=\frac{\alpha\left(1-(1-\pi)^k\right)}{1-\alpha(1-\pi)^k-(1-\alpha)(1-\pi)^{k-1}}$ then the claim holds. 
\end{proof}
\begin{thm}\label{thm:rand}
For any $\delta> 0$ such that $\delta\le (1-\pi)^2$, the randomized scheme (Algorithm~\ref{alg:rand}) achieves the following average AoI
 \begin{align}
    \mathrm{AoI}_{\mathrm{rand}}=\frac{1}{2}\left( \frac{\mathbb{E}[X^2]}{\mathbb{E}[X]}+1\right)+\mathbb{E}[H],
\end{align}
where $\mathbb{E}[H]$ is calculated in Lemma~\ref{lem:rand_H}, and $\mathbb{E}[X]$ and $\mathbb{E}[X^2]$ are calculated in Lemma~\ref{lem:XX2}.
Furthermore, any status is received by the receiver with probability at least $1-\delta$.
\end{thm}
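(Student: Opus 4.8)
I would prove the two assertions of the theorem --- the reliability guarantee and the closed form for $\mathrm{AoI}_{\mathrm{rand}}$ --- separately, reusing the machinery from the deterministic case wherever possible. For the reliability guarantee, first note that the hypothesis $\delta \le (1-\pi)^2$ forces $k = \lceil \log_{1-\pi}\delta\rceil \ge 2$, so that for every status the relevant branch of Algorithm~\ref{alg:rand} is Line~\ref{line:bern}, where $\wt k$ equals $k$ with probability $\alpha := (\delta - p_2)/(p_1 - p_2)$ and $k-1$ with probability $1-\alpha$, with $p_1 = (1-\pi)^k$ and $p_2 = (1-\pi)^{k-1}$. Before using $\alpha$ I must check it is a legitimate probability: since $k = \lceil \log_{1-\pi}\delta\rceil$ gives $p_1 = (1-\pi)^{k} \le \delta < (1-\pi)^{k-1} = p_2$, both $\delta - p_2$ and $p_1 - p_2$ are negative with $|\delta-p_2|\le|p_1-p_2|$, hence $\alpha\in[0,1]$. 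Conditioned on the value of $\wt k$, a status fails exactly when all $\wt k$ of its transmissions fail, i.e.\ with probability $(1-\pi)^{\wt k}$; so its unconditional failure probability equals $\alpha p_1 + (1-\alpha)p_2 = p_2 + \alpha(p_1-p_2) = \delta$, which is $\le\delta$ as required.

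For the average AoI, I would invoke the standard renewal-theoretic identity (as in \cite{AoI-main} and the proof of Theorem~\ref{thm:det}) that the time-average AoI equals $\mathbb{E}[A]/\mathbb{E}[X]$, and then split $A = U + V$ into the rectangular and triangular parts as in Figure~\ref{Fig:AoI}. The point to stress is that the derivations of $\mathbb{E}[V] = \tfrac12(\mathbb{E}[X^2]+\mathbb{E}[X])$ from Lemma~\ref{lem:V} and of $\mathbb{E}[U] = \mathbb{E}[H]\,\mathbb{E}[X]$ from \eqref{eq:U} carry over verbatim to the randomized scheme: Lemma~\ref{lem:V} only uses the identity $V = X(X+1)/2$, and the independence of $H_i$ and $X_i$ still holds because $H_i$ is determined by the cap and the transmission outcomes of the $(i-1)$'th successful message, whereas $X_i$ is determined by the fresh, independent caps and transmission outcomes of the messages sent after that success. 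Combining these gives $\mathrm{AoI}_{\mathrm{rand}} = \tfrac12\big(\mathbb{E}[X^2]/\mathbb{E}[X] + 1\big) + \mathbb{E}[H]$, and substituting $\mathbb{E}[H]$ from Lemma~\ref{lem:rand_H} and $\mathbb{E}[X], \mathbb{E}[X^2]$ from Lemma~\ref{lem:XX2} (via Proposition~\ref{prop:T} and Fact~\ref{fact:F}) produces the displayed expression.

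The genuinely delicate computation --- evaluating $\mathbb{E}[H]$ for the randomized scheme, where one must untangle the self-referential equation for $\Pr[k'=k]$ arising from conditioning on the $(i-1)$'th message being successful --- is already dispatched in Lemma~\ref{lem:rand_H}. So within the theorem itself the only nontrivial steps are the verification that $\alpha\in[0,1]$ and the precise justification that $H_i$ and $X_i$ remain independent once the caps are randomized; everything else is substitution into formulas established earlier, and I expect the proof to be short.
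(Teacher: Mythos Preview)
Your proposal is correct and follows essentially the same approach as the paper. The paper's proof is in fact terser than yours: for the reliability guarantee it does the same computation $\alpha p_1+(1-\alpha)p_2=\delta$ (it also includes a separate $k=1$ case, which your observation $\delta\le(1-\pi)^2\Rightarrow k\ge2$ correctly renders superfluous), and for the AoI formula it simply declares the result ``immediate by the definition of $X$ and the definition of $H$,'' whereas you spell out why Lemma~\ref{lem:V}, \eqref{eq:U}, and the independence of $H_i$ and $X_i$ continue to hold under randomized caps --- useful clarifications the paper omits.
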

\begin{proof}
	First, we start by proving the reliability guarantee. We consider two cases: 
	
	{\bf Case I ($k>1$):} Recall that $p_1=(1-\pi)^k$ and $p_2=(1-\pi)^{k-1}$ as per Algorithm~\ref{alg:rand}. In this case $\wt{k}=k$ with probability $\frac{\delta-p_2}{p_1-p_2}$ and $\wt{k}=k-1$, otherwise (see Line~\ref{line:bern} in Algorithm~\ref{alg:rand}). Let $W:=\mathbf{1}_{\wt{k}=k}$, be the indicator random variable for the event of $\wt{k}=k$. Then
	\begin{align}
	\Pr[\text{fail}]&=\Pr[W=1]\Pr[\text{fail}|W=1]+\Pr[W=0]\Pr[\text{fail}|W=0]\nonumber\\
	&=\left(\frac{\delta-p_2}{p_1-p_2}\right)(1-\pi)^k + \left(1-\frac{\delta-p_2}{p_1-p_2}\right)(1-\pi)^{k-1}\nonumber\\
	&=\left(\frac{\delta-p_2}{p_1-p_2}\right)\cdot p_1 + \left(1-\frac{\delta-p_2}{p_1-p_2}\right)\cdot p_2\nonumber\\
	&= \delta.
	\end{align}
	{\bf Case II ($k=1$):} In that case Algorithm~\ref{alg:rand} do not use randomness (see Line~\ref{line:ktilde1} of Algorithm~\ref{alg:rand}) and we have
	$
	\Pr[\text{fail}]=1-\pi\le \delta,
	$
	since $\log_{1-\pi}\delta \le 1$.
	
	So, in both cases we proved that the failure probability is at most $\delta$.
	
	The proof of the AoI value is immediate by the definition of $X$ and the definition of $H$.
\end{proof}
\begin{figure}

  	  \centering{\includegraphics[scale=0.4]{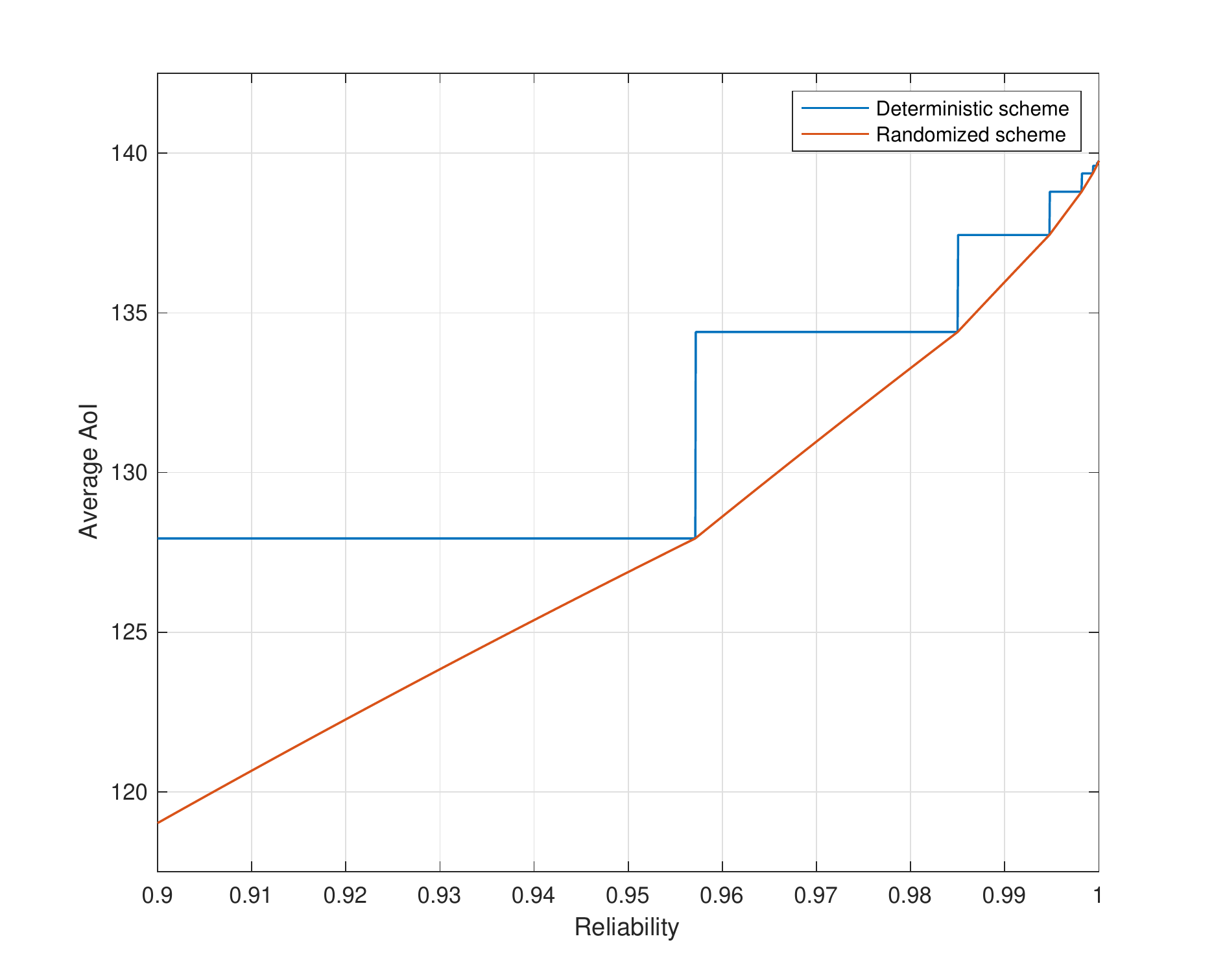}}

\caption{Comparison of average AoI-reliability trade offs for the setting where $\beta=87$ and $\pi=0.65$, for the deterministic and the randomized schemes.}\label{fig:tradeoff}
\end{figure}
In Figure~\ref{fig:tradeoff}, we compare the average AoI-reliability trade off curve of randomized scheme to the deterministic scheme. One can clearly see that using randomness we improved the average AoI of the system and achieved a smooth trade off curve. 

\subsection{Zero-error scheme}\label{sec:zero-error}
In order to achieve a zero-error scheme, we can set $\delta=0$, or equivalently set the maximum number of trials, $k$, to be infinity. Consequently, the sensor insists on sending each status until a successful transmission happens. The reader should note that in this case there is no difference between the deterministic and the randomized schemes. The average AoI of a zero-error scheme is stated below. 
\begin{cor}[Average AoI of zero-error scheme]\label{cor:zeroerror}
	The average AoI of a zero-error scheme is
$
(1+\beta)\cdot \left(\frac{4-3\pi}{2\pi}\right)+\frac{2\beta+1}{2(1+\beta)}.
$
\end{cor}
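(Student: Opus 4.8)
The plan is to derive Corollary~\ref{cor:zeroerror} as the $k\to\infty$ specialization of the deterministic analysis. Setting $\delta=0$ forces $k=\lceil\log_{1-\pi}\delta\rceil=\infty$ in Algorithm~\ref{alg:k-LTS}, i.e.\ the ``giving up on the status'' branch never fires and each status is retransmitted until it succeeds; in this regime the deterministic and randomized schemes coincide, so it is enough to treat the uncapped version of Algorithm~\ref{alg:k-LTS}. I would then run the exact same chain of identities used in the proof of Theorem~\ref{thm:det}, but with $k=\infty$: the only quantity that changes is $\mathbb{E}[H]$.

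Concretely, with no cap the number of full battery charges spent on the $(i-1)$'th successful message is $F\sim\mathrm{Geom}(\pi)$ \emph{without} the conditioning on $\{Y\le k\}$ that appeared in Lemma~\ref{lem:Hdet}. Repeating that lemma's computation gives $\mathbb{E}[H]=\mathbb{E}[T]\cdot(\mathbb{E}[F]-1)=\mathbb{E}[T]\left(\tfrac{1}{\pi}-1\right)$, which is finite, so the renewal--reward identity $\mathrm{AoI}=\tfrac12\!\left(\tfrac{\mathbb{E}[X^2]}{\mathbb{E}[X]}+1\right)+\mathbb{E}[H]$ (together with \eqref{eq:U}) is valid. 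Plugging in $\mathbb{E}[X]$ and $\mathbb{E}[X^2]$ from Lemma~\ref{lem:XX2}, then $\mathbb{E}[T]=1+\beta$ and $\mathbb{E}[T^2]=1+3\beta+\beta^2$ from Proposition~\ref{prop:T} and $\mathbb{E}[F]=\tfrac1\pi,\ \mathbb{E}[F^2]=\tfrac{2-\pi}{\pi^2}$ from Fact~\ref{fact:F}, one finds $\tfrac{\mathbb{E}[X^2]}{\mathbb{E}[X]}=\tfrac{\beta}{1+\beta}+\tfrac{(1+\beta)(2-\pi)}{\pi}$; combining the pieces and using $\tfrac{2-\pi}{2\pi}+\tfrac1\pi-1=\tfrac{4-3\pi}{2\pi}$ yields
\[
\mathrm{AoI}=(1+\beta)\cdot\frac{4-3\pi}{2\pi}+\frac{2\beta+1}{2(1+\beta)}.
\]
As a sanity check, this is precisely the limit of the formula in Theorem~\ref{thm:det} as $k\to\infty$, since $0<1-\pi<1$ gives $(1-\pi)^k\to0$ and hence $\tfrac{k(1-\pi)^k}{1-(1-\pi)^k}\to0$, leaving $(1+\beta)\big(\tfrac2\pi-\tfrac32\big)+\tfrac{2\beta+1}{2(1+\beta)}$.

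There is no real obstacle here — the statement is essentially a corollary of Theorem~\ref{thm:det}. The only point requiring a word of care is that the $\delta=0$ scheme is not literally covered by the hypothesis $\delta\le1-\pi$ of Theorem~\ref{thm:det}, so rather than invoking a limit of formulas I would recompute $\mathbb{E}[H]$ directly (as above) for the uncapped scheme, which simultaneously confirms finiteness of the relevant moments and legitimizes the renewal--reward ratio; the remaining algebra is the same reordering of terms already carried out in the proof of Theorem~\ref{thm:det}.
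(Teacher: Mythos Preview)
Your proposal is correct and follows essentially the same route as the paper: the paper simply takes the $k\to\infty$ limit of the expression in Theorem~\ref{thm:det}, using $k(1-\pi)^k/(1-(1-\pi)^k)\to 0$ to obtain $(1+\beta)\big(\tfrac{2}{\pi}-\tfrac{3}{2}\big)+\tfrac{2\beta+1}{2(1+\beta)}$. Your version is in fact a bit more careful, since you recompute $\mathbb{E}[H]$ directly for the uncapped scheme rather than relying on a formal limit of the formula at the excluded endpoint $\delta=0$.
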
 
\begin{proof}
	\begin{align}
	\lim_{k \rightarrow \infty} \text{AoI}_{\text{det}}&=\lim_{k\rightarrow \infty} (1+\beta)\left(\frac{2}{\pi}-\frac{k(1-\pi)^k}{1-(1-\pi)^k}-\frac{3}{2}\right)+\frac{2\beta+1}{2(1+\beta)}\nonumber\\
	&= (1+\beta)\cdot \left(\frac{4-3\pi}{2\pi}\right)+\frac{2\beta+1}{2(1+\beta)}.
	\end{align}
\end{proof}
The difference between the average AoI of the zero-error scheme and the simple case when each status is transmitted only once (as in \cite{krikidis2019average}) can be seen as the cost of achieving a zero-error scheme. Interestingly, in some regimes, the difference is relatively small, i.e., one can achieve a zero-error scheme with only a relatively small increase in the average AoI of the system. 

\section{Simulations}\label{sec:simulations}

In this section, we present the results of the simulations for the randomized scheme. We use a similar setting for simulations as \cite{krikidis2019average}: we set $\sigma^2=-50 \mathrm{dBm}$, $\eta=0.5$ and $r=0.05 \mathrm{BPCU}$ and we also assume that the sensor node has distance $d$ meters away from both the energy transmitter and the receiver. We also model the power gain of the channels as $\lambda=10^3d^\alpha$, where $\alpha=2.2$ is the path-loss exponent \cite{8275025}. 

First, we compare the average AoI of the randomized scheme and the deterministic scheme, with the same parameters. In Table~\ref{tab:det-rand}, we present the theoretical results and the outcome of simulations, for the case when $d=20m$ and $P=1W$, and for two different reliability guarantees, $80\%$ and $90\%$. First observation is that, as expected, generally using the randomized scheme lowers the average AoI. In Table~\ref{tab:det-rand}, the second observation is that even though the desired reliability in the first two rows are $90\%$ and the desired reliability of the last two rows are $80\%$, the deterministic scheme provides the same average AoI. The reason is the wasteful approach of this scheme. More specifically, the algorithm uses the same number $k$, for the limit of maximum number of transmission for both cases (see the discussion in the end of Section~\ref{sec:det}). However, the randomized scheme does not suffer from this phenomenon, due to use of randomness.

\begin{table}[H]
	\centering
	\begin{tabular}{c|c|cc|cc}
		\multicolumn{2}{c}{$ $} &\multicolumn{4}{|c}{Average AoI (in time slots)}\\
		\hline
		\multicolumn{2}{c}{Parameters} &\multicolumn{2}{|c}{Deterministic scheme}&\multicolumn{2}{|c}{Randomized scheme }  \\
		\hline
		Battery capacity& $\delta$ & Theory &  Simulation    & Theory & Simulation  \\
		\hline
		$1 \times 10^{-3}$ &$0.1$& $\approx 1425.6$ & $\approx 1437.7$  			& $\approx 1361.2$   				& $\approx 1364.5$\\
		$1.5 \times 10^{-3}$ &$0.1$& $\approx 1799.1$ & $\approx 1797.2$  			& $\approx 1641.3$   				& $\approx 1641.7$\\
		$1 \times 10^{-3}$ &$0.2$& $\approx 1425.6$ & $\approx 1421.6$  			& $\approx 1204.7$   				& $\approx 1204.0$\\
		$1.5 \times 10^{-3}$ &$0.2$& $\approx 1799.1$ & $\approx 1804.1$  			& $\approx 1502.0$   				& $\approx 1502.6$\\
		
		\hline
	\end{tabular}
	\caption{Comparison of average AoI of the deterministic and the randomized scheme, when $d=20m$, $P=1W$, for various $\delta$ values.}
	\label{tab:det-rand}
\end{table}

For the next set of simulations, in the randomized scheme, we set $\delta=0.01$, which implies $99\%$ reliability guarantee. In figure~\ref{fig:rand}, we present 5 curves for $$(d,P)\in \{(20 ,1),(15,1),(20,3),(20,5),(20,10)\},$$ where $P$ represent the power of energy transmitter in watts and $d$ is the distance of sensor node from the energy transmitter and the receiver in meters, as explained above.  
\begin{figure}[H]
	\centering
	\includegraphics[scale=0.4]{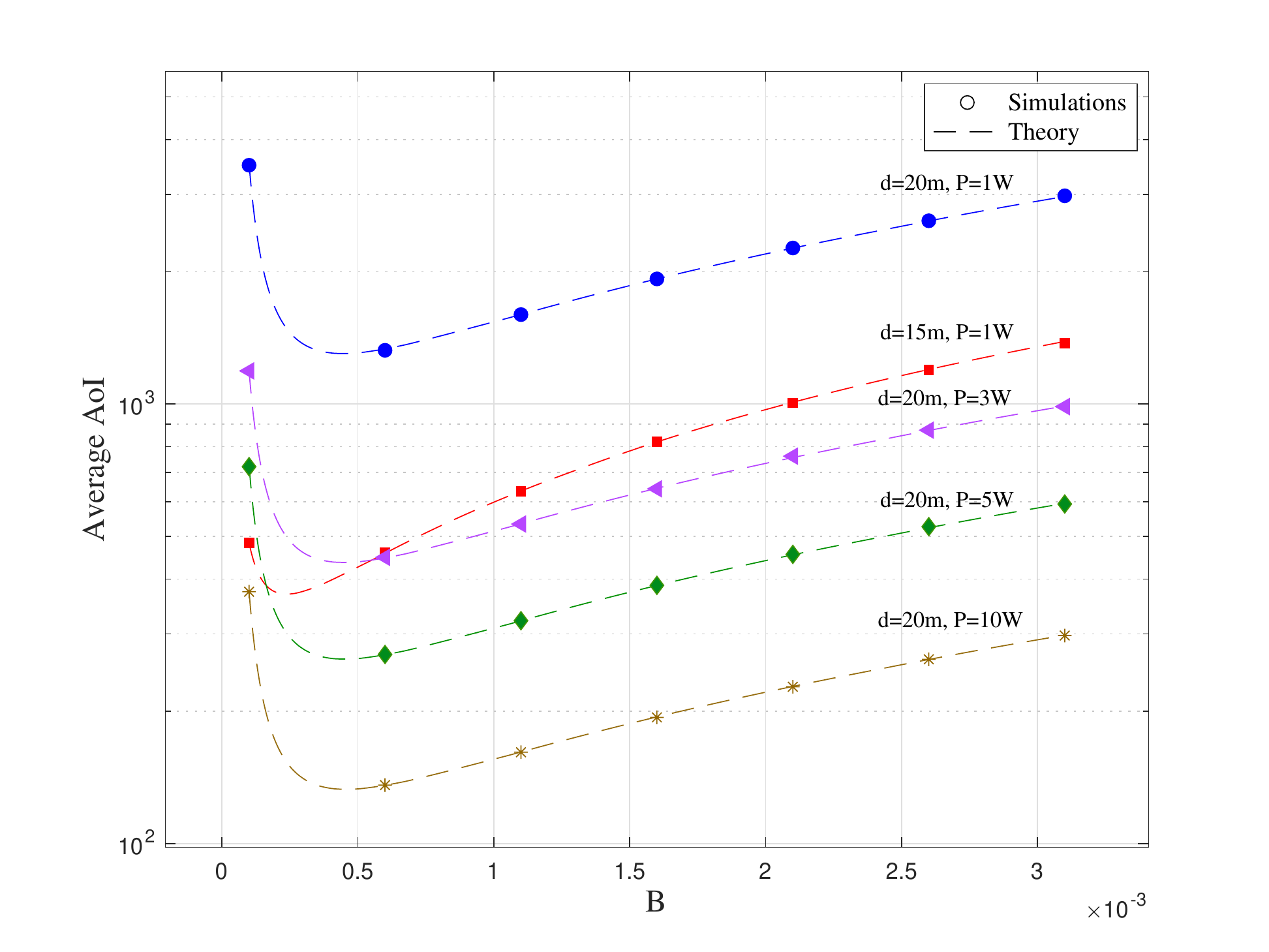}

	\caption{Average AoI with reliability guarantee of $99\%$ for a range of battery capacity for five different settings of parameters.}\label{fig:rand}
\end{figure}

For the set of simulations illustrated in Figure~\ref{fig:delta}, we set $P=1W$, $d=20m$ and other parameters as above in the randomized scheme. We compare average AoI curves for $\delta\in\{1,0.1,0.01,0\}$. One should note that $\delta=1$ means that reliability guarantee is equal to $\pi$, and $\delta=0$ correspond to the zero-error scheme (see Section~\ref{sec:zero-error}). This figure shows that as we aim for higher reliability guarantees, the average AoI of the system increases, as expected. Furthermore, in Table~\ref{tab:rel}, we compare the reliability promised by setting $\delta$ and Theorem~\ref{thm:rand} (theory), and the reliability we get using simulations for various battery capacities, when $P=1W$ and $d=20m$.

\begin{figure}[H]
	\centering
	\includegraphics[scale=0.4]{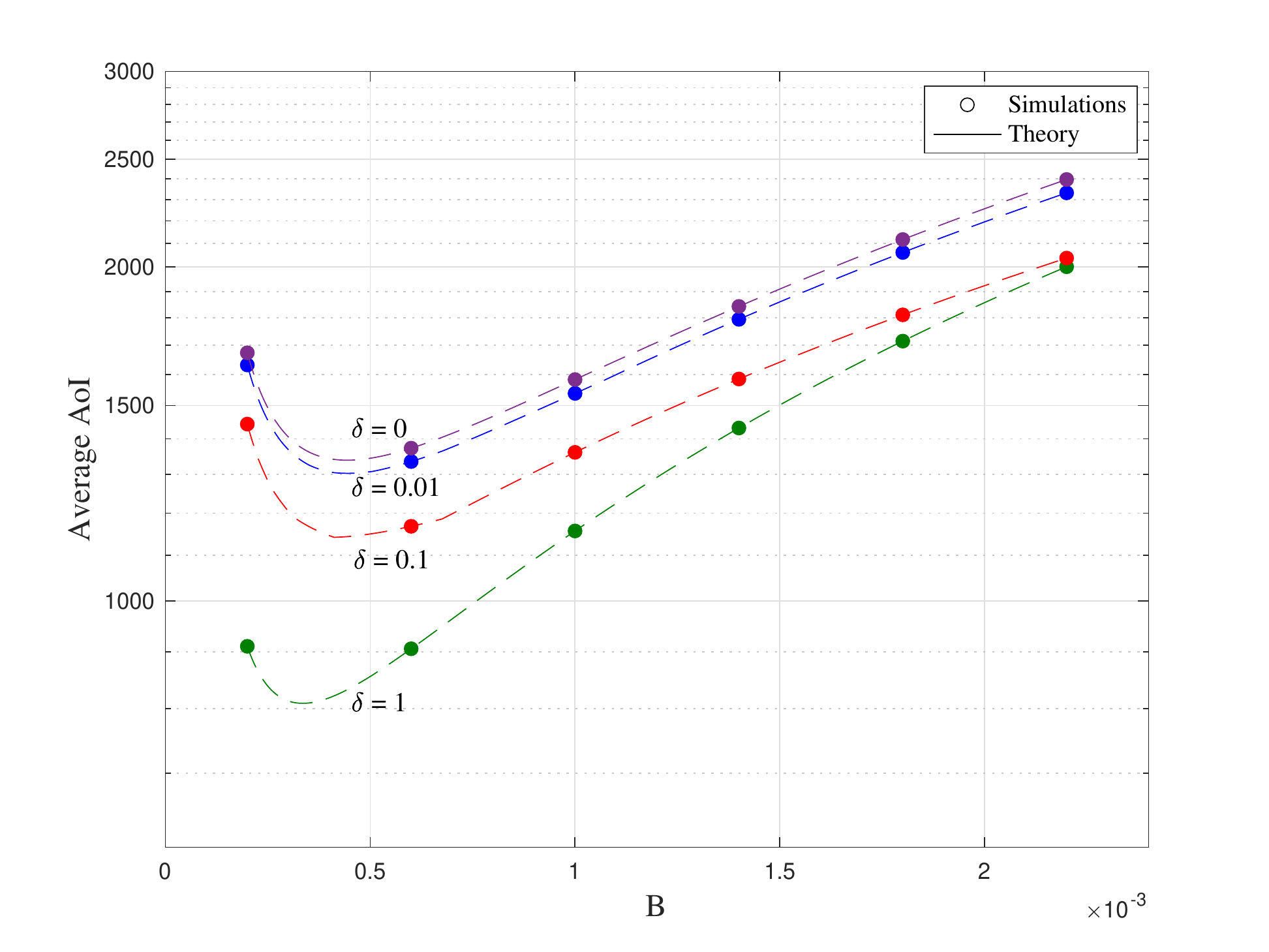}

	\caption{Average AoI with different reliability guarantees for a range of battery capacity.  }\label{fig:delta}

\end{figure}

\begin{table}[H]
	\centering
\begin{tabular}{c|c|ccc}
	\hline
	 &Theory&\multicolumn{3}{|c}{Simulations}  \\
	\hline
	Battery capacity &Reliability (\%)& Statuses sent    & Statuses received & Reliability (\%)    \\
	\hline
	$0.8 \times 10^{-3}$ & $90\%$ & 69181 			& 62220   				& $\approx 89.94\%$\\
	$0.8 \times 10^{-3}$ & $99\%$ & 62933 			& 62314   				& $\approx 99.02\%$\\
	$1 \times 10^{-3}$ & $90\%$ & 58964 			& 53082   				& $\approx 90.02\%$\\
	$1 \times 10^{-3}$ & $99\%$ & 53638 			& 53089  				 & $\approx 98.98\%$\\
	$1.5 \times 10^{-3}$ & $90\%$ & 42844			& 38514						   & $\approx 89.89\%$\\
	$1.5 \times 10^{-3}$ & $99\%$ & 39014 			& 38649  				 & $\approx 99.06\%$\\

	\hline
\end{tabular}
	\caption{Comparison of reliability guarantees of simulations and theoretical results, when $d=20m$ and $P=1W$, for various battery capacities.}
	\label{tab:rel}
\end{table}

\begin{rem}
	Since the theoretical proofs are precise and the expressions are without using any approximations, the relatively small discrepancies between the theoretical results and the outcome of the simulations are due to the small number of statuses used for the simulations. 
\end{rem}

\section{Conclusion}
In this paper, we defined a novel notion, called AoI-reliability trade off. Additionally, we presented two natural schemes in a sensor network, in order to provide a reliability guarantee. Moreover, we presented closed expressions for their average AoI. We also showed that in some sensor networks, by slightly increasing the average AoI, one can achieve a zero-error scheme. Also, we showed that the numerical simulations match our theoretical results. An interesting future direction is to present AoI-reliability trade offs for more involved energy harvesting sensor networks, and in general for any communication system for which reliability is a crucial factor.

\appendices

\section{Omitted claims and proofs}\label{sec:omitted}
\begin{claim}\label{claim:arith}
	For any integer $k\ge 1$ and $\pi\in(0,1)$, we have the following
	\begin{align}
	\sum_{j=1}^{k} (j-1)\cdot \frac{(1-\pi)^{j-1}\pi}{1-(1-\pi)^{k}}
	= \frac{1}{\pi}-\frac{k(1-\pi)^k}{1-(1-\pi)^k}-1.
	\end{align}
\end{claim}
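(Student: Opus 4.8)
The plan is to clear the common factor $1/\bigl(1-(1-\pi)^{k}\bigr)$, reduce the left-hand side to a truncated arithmetico-geometric sum with a known closed form, and then simplify. Writing $q:=1-\pi$ for brevity and reindexing $j-1\mapsto i$, the identity to be proved is equivalent to
\[
\frac{\pi}{1-q^{k}}\sum_{i=0}^{k-1} i\,q^{i} \;=\; \frac{1}{1-q}-\frac{kq^{k}}{1-q^{k}}-1 ,
\]
since $\frac{1}{\pi}=\frac{1}{1-q}$.

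The key ingredient is the closed form
\[
\sum_{i=0}^{k-1} i\,q^{i} \;=\; \frac{q-kq^{k}+(k-1)q^{k+1}}{(1-q)^{2}} ,
\]
which I would obtain by differentiating the finite geometric series $\sum_{i=0}^{k-1} q^{i}=\frac{1-q^{k}}{1-q}$ term by term in $q$ and then multiplying through by $q$; a one-line induction on $k$ is an equally good alternative. Substituting this into the left-hand side and using $\pi=1-q$, so that $\frac{\pi}{(1-q)^{2}}=\frac{1}{1-q}$, turns the left-hand side into $\dfrac{q-kq^{k}+(k-1)q^{k+1}}{(1-q)(1-q^{k})}$.

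It then remains to check that the right-hand side equals the same fraction. Using $\frac{1}{1-q}-1=\frac{q}{1-q}$ and placing $\frac{q}{1-q}-\frac{kq^{k}}{1-q^{k}}$ over the common denominator $(1-q)(1-q^{k})$ produces the numerator $q(1-q^{k})-kq^{k}(1-q)=q-kq^{k}+(k-1)q^{k+1}$, which matches, establishing the claim. The argument is entirely elementary; the only place to exercise care is the bookkeeping in the final fraction manipulation, and one may note as a sanity check that at $k=1$ both sides equal $0$.
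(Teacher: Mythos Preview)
Your argument is correct. The derivation of the closed form for $\sum_{i=0}^{k-1} i\,q^{i}$ via differentiation of the finite geometric series is standard and the subsequent algebra checks out; your sanity check at $k=1$ is also accurate.

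The paper reaches the same identity by a different route. Instead of differentiating, it rewrites $j-1=\sum_{i=1}^{j-1}1$, interchanges the two finite sums, and then applies the ordinary geometric-series formula to each inner sum $\sum_{j=i+1}^{k}(1-\pi)^{j-1}$, obtaining
\[
\sum_{j=1}^{k}(j-1)(1-\pi)^{j-1}
=\frac{1-\pi}{\pi}\cdot\frac{1-(1-\pi)^{k}}{\pi}-\frac{k(1-\pi)^{k}}{\pi},
\]
from which the claim follows after multiplying by $\pi/\bigl(1-(1-\pi)^{k}\bigr)$. Both arguments are elementary and of comparable length: yours invokes the textbook arithmetico-geometric closed form (at the mild cost of a differentiation step or an induction), while the paper's double-sum swap is purely combinatorial and stays within finite summation identities. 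Neither has any real advantage over the other here; they are two standard ways to evaluate the same truncated arithmetico-geometric sum.
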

\begin{proof}
	First, note that
	\begin{align}\label{eq:appen1}
	\sum_{j=1}^{k}(j-1)\cdot \frac{(1-\pi)^{j-1}\pi}{1-(1-\pi)^{k}}&= \frac{\pi}{1-(1-\pi)^k}\sum_{j=1}^{k}(j-1)\cdot {(1-\pi)^{j-1}}.
	\end{align}
	Then, we have
	\begin{align}
	\sum_{j=1}^{k}(j-1)\cdot {(1-\pi)^{j-1}}&= \sum_{j=1}^{k}\sum_{i=1}^{j-1}(1-\pi)^{j-1}\nonumber\\
	&=\sum_{i=1}^{k}\sum_{j=i+1}^{k}(1-\pi)^{j-1}\nonumber\\
	&=\sum_{i=1}^{k}(1-\pi)^{i}\sum_{j=0}^{k-i-1}(1-\pi)^j\nonumber\\
	&=\sum_{i=1}^{k}(1-\pi)^{i}\cdot \frac{1-(1-\pi)^{k-i}}{\pi}\nonumber\\
	&=\frac{1-\pi}{\pi}\cdot\frac{1-(1-\pi)^{k}}{\pi}-\frac{k(1-\pi)^{k}}{\pi}\label{eq:appen2}.
	\end{align}
	Now, combining \eqref{eq:appen1} and \eqref{eq:appen2}, we get
	\begin{align}
	\sum_{j=1}^{k}(j-1)\cdot \frac{(1-\pi)^{j-1}\pi}{1-(1-\pi)^{k}}=\frac{1}{\pi}-1-\frac{k(1-\pi)^k}{1-(1-\pi)^k},
	\end{align}
	which concludes the proof.
\end{proof}


\begin{thebibliography}{10}
	
	\bibitem{6449245}
	S.~{Luo}, R.~{Zhang}, and T.~J. {Lim}.
	\newblock Optimal save-then-transmit protocol for energy harvesting wireless
	transmitters.
	\newblock {\em IEEE Transactions on Wireless Communications}, 12(3):1196--1207,
	2013.
	
	\bibitem{8389371}
	N.~{Garg} and R.~{Garg}.
	\newblock Energy harvesting in iot devices: A survey.
	\newblock In {\em 2017 International Conference on Intelligent Sustainable
		Systems (ICISS)}, pages 127--131, 2017.
	
	\bibitem{4595260}
	L.~R. {Varshney}.
	\newblock Transporting information and energy simultaneously.
	\newblock In {\em 2008 IEEE International Symposium on Information Theory},
	pages 1612--1616, 2008.
	
	\bibitem{6489506}
	R.~{Zhang} and C.~K. {Ho}.
	\newblock Mimo broadcasting for simultaneous wireless information and power
	transfer.
	\newblock {\em IEEE Transactions on Wireless Communications}, 12(5):1989--2001,
	2013.
	
	\bibitem{FEGHHI2018108}
	Mahmood~Mohassel Feghhi, Aliazam Abbasfar, and Mahtab Mirmohseni.
	\newblock Low complexity resource allocation in the relay channels with energy
	harvesting transmitters.
	\newblock {\em Ad Hoc Networks}, 77:108 -- 118, 2018.
	
	\bibitem{AoI-main}
	Antzela Kosta, Nikolaos Pappas, and Vangelis Angelakis.
	\newblock Age of information: A new concept, metric, and tool.
	\newblock {\em Foundations and Trends® in Networking}, 12:162--259, 11 2017.
	
	\bibitem{6195689}
	S.~{Kaul}, R.~{Yates}, and M.~{Gruteser}.
	\newblock Real-time status: How often should one update?
	\newblock In {\em 2012 Proceedings IEEE INFOCOM}, pages 2731--2735, 2012.
	
	\bibitem{10.1145/2034832.2034852}
	Eitan Altman, Rachid El~Azouzi, Daniel~S. Menasch\'{e}, and Yuedong Xu.
	\newblock Poster: Aging control for smartphones in hybrid networks.
	\newblock {\em SIGMETRICS Perform. Eval. Rev.}, 39(2):68, September 2011.
	
	\bibitem{5984917}
	S.~{Kaul}, M.~{Gruteser}, V.~{Rai}, and J.~{Kenney}.
	\newblock Minimizing age of information in vehicular networks.
	\newblock In {\em 2011 8th Annual IEEE Communications Society Conference on
		Sensor, Mesh and Ad Hoc Communications and Networks}, pages 350--358, 2011.
	
	\bibitem{6310931}
	S.~K. {Kaul}, R.~D. {Yates}, and M.~{Gruteser}.
	\newblock Status updates through queues.
	\newblock In {\em 2012 46th Annual Conference on Information Sciences and
		Systems (CISS)}, pages 1--6, 2012.
	
	\bibitem{7364263}
	C.~{Kam}, S.~{Kompella}, G.~D. {Nguyen}, and A.~{Ephremides}.
	\newblock Effect of message transmission path diversity on status age.
	\newblock {\em IEEE Transactions on Information Theory}, 62(3):1360--1374,
	2016.
	
	\bibitem{8820073}
	Y.~{Inoue}, H.~{Masuyama}, T.~{Takine}, and T.~{Tanaka}.
	\newblock A general formula for the stationary distribution of the age of
	information and its application to single-server queues.
	\newblock {\em IEEE Transactions on Information Theory}, 65(12):8305--8324,
	2019.
	
	\bibitem{8254156}
	A.~{Arafa} and S.~{Ulukus}.
	\newblock Age-minimal transmission in energy harvesting two-hop networks.
	\newblock In {\em GLOBECOM 2017 - 2017 IEEE Global Communications Conference},
	pages 1--6, 2017.
	
	\bibitem{8123937}
	X.~{Wu}, J.~{Yang}, and J.~{Wu}.
	\newblock Optimal status update for age of information minimization with an
	energy harvesting source.
	\newblock {\em IEEE Transactions on Green Communications and Networking},
	2(1):193--204, 2018.
	
	\bibitem{age-minimal}
	A.~{Arafa}, J.~{Yang}, and S.~{Ulukus}.
	\newblock Age-minimal online policies for energy harvesting sensors with random
	battery recharges.
	\newblock In {\em 2018 IEEE International Conference on Communications (ICC)},
	pages 1--6, 2018.
	
	\bibitem{8822722}
	A.~{Arafa}, J.~{Yang}, S.~{Ulukus}, and H.~V. {Poor}.
	\newblock Age-minimal transmission for energy harvesting sensors with finite
	batteries: Online policies.
	\newblock {\em IEEE Transactions on Information Theory}, 66(1):534--556, 2020.
	
	\bibitem{8406846}
	S.~{Farazi}, A.~G. {Klein}, and D.~R. {Brown}.
	\newblock Average age of information for status update systems with an energy
	harvesting server.
	\newblock In {\em IEEE INFOCOM 2018 - IEEE Conference on Computer
		Communications Workshops (INFOCOM WKSHPS)}, pages 112--117, 2018.
	
	\bibitem{8437904}
	S.~{Farazi}, A.~G. {Klein}, and D.~R. {Brown}.
	\newblock Age of information in energy harvesting status update systems: When
	to preempt in service?
	\newblock In {\em 2018 IEEE International Symposium on Information Theory
		(ISIT)}, pages 2436--2440, 2018.
	
	\bibitem{yates2020age}
	Roy~D. Yates, Yin Sun, D.~Richard~Brown III, Sanjit~K. Kaul, Eytan Modiano, and
	Sennur Ulukus.
	\newblock Age of information: An introduction and survey, 2020.
	
	\bibitem{krikidis2019average}
	Ioannis Krikidis.
	\newblock Average age of information in wireless powered sensor networks.
	\newblock {\em IEEE Wireless Communications Letters}, 8(2):628--631, 2019.
	
	\bibitem{8006544}
	R.~D. {Yates} and S.~K. {Kaul}.
	\newblock Status updates over unreliable multiaccess channels.
	\newblock In {\em 2017 IEEE International Symposium on Information Theory
		(ISIT)}, pages 331--335, 2017.
	
	\bibitem{8422521}
	S.~{Nath}, J.~{Wu}, and J.~{Yang}.
	\newblock Optimum energy efficiency and age-of-information tradeoff in
	multicast scheduling.
	\newblock In {\em 2018 IEEE International Conference on Communications (ICC)},
	pages 1--6, 2018.
	
	\bibitem{luo2013optimal}
	Shixin Luo, Rui Zhang, and Teng~Joon Lim.
	\newblock Optimal save-then-transmit protocol for energy harvesting wireless
	transmitters.
	\newblock {\em IEEE Transactions on Wireless Communications}, 12(3):1196--1207,
	2013.
	
	\bibitem{8275025}
	Q.~{Wu}, W.~{Chen}, D.~W.~K. {Ng}, and R.~{Schober}.
	\newblock Spectral and energy-efficient wireless powered iot networks: Noma or
	tdma?
	\newblock {\em IEEE Transactions on Vehicular Technology}, 67(7):6663--6667,
	July 2018.
	
\end{thebibliography}
\end{document}